\DeclareMathSymbol{\lsb@l}{\mathalpha}{letters}{`l}
\newtheorem{theorem}{Theorem}
\newtheorem*{theorem*}{Theorem}
\newtheorem*{proposition*}{Proposition}
\newtheorem*{remark*}{Remark}
\newtheorem{lemma}[theorem]{Lemma}
\newtheorem*{lemma*}{Lemma}
\newtheorem*{problem*}{Problem}
\newtheorem*{claim*}{Claim}
\newtheorem*{conjecture*}{Conjecture}
\newtheorem*{corollary*}{Corollary}
\newtheorem*{example*}{Example}
\newtheorem{observation}{Observation}
\newtheorem*{definitions*}{Обозначения}
\newtheorem*{definition*}{Определение}
\def\geq{\geqslant}
\def\leq{\leqslant}
\def\N{\mathbb{N}}
\def\mR{\mathcal{R}}
\def\mB{\mathcal{B}}
\def\T{\mathcal T}
\begin{document}
\title{Tree covers of size $2$ for the  Euclidean plane}

\author{Artur Bikeev\footnote{Moscow Institute of Physics and Technology, Email: {\tt
bikeev99@mail.ru} }, Andrey Kupavskii\footnote{Moscow Institute of Physics and Technology, Innopolis University, Email: {\tt kupavskii@ya.ru} }, Maxim Turevskii\footnote{Saint Petersburg State University, Email: {\tt
turmax20052005@gmail.com}}}

\maketitle

\begin{abstract}
For a given metric space $(P,\phi)$, a tree cover of stretch $t$ is a collection of trees on $P$ such that edges $(x,y)$ of trees receive length $\phi(x,y)$, and such that for any pair of points $u,v\in P$ there is a tree $T$ in the collection such that the induced graph distance in $T$ between $u$ and $v$ is at most $t\phi(u,v).$ In this paper, we show that, for any set of points $P$ on the Euclidean plane, there is a tree cover consisting of two trees and with stretch $O(1).$ Although the problem in higher dimensions remains elusive, we manage to prove that for a slightly stronger variant of a tree cover problem we must have at least $(d+1)/2$ trees in any constant stretch tree cover in $\mathbb R^d$.
\end{abstract}

\section{Introduction}
Consider a set of $n$ points $P$ in $\mathbb R^2$ with Euclidean norm. Can we use a sparse structure in order to imitate the metric space induced on $P$? In this paper, we address a question of this sort. Let us introduce some definitions. Consider a metric space $(M, \phi)$ and a finite $P\subset M$. 
Take an edge-weighted graph $G = (P, E, w)$, where  every edge $(p,q)\in E$ has weight $w(p,q) = \phi(p,q)$, and the distance $\rho_G(x,y)$ between any two points $x,y\in P$ is the shortest path distance between $x$ and $y$ in $G$. Clearly, $\rho_G(x,y)\ge \phi(x,y)$ for any $x,y\in P$ and $G$. We say that $G$ is a {\it $t$-spanner for $\phi$} if $\rho_G(x,y)\le t \phi(x,y)$ for any  $x,y\in P.$ The parameter $t$ is called the \textit{stretch} of the spanner $G$.

Spanners were introduced in the late 80s in the context of distributed computing \cite{Peleg, Peleg2}. They found applications in efficient broadcast protocols, synchronizing networks and computing global functions, gathering and disseminating data, and routing.

Naturally, there is a trade-off between the number of edges in $G$ and its stretch. Any metric space could be imitated by a weighted complete graph. At the same time, $G$ has to be connected in order to guarantee finite distances between any two points. In general metrics, in order to reach stretch $3$ we may need $\Omega(n^2)$ edges in the worst case, and  stretch $2k-1$ requires $\Omega(n^{1+1/k})$ edges \cite{Alt93, Peleg}.  Spanners for Euclidean spaces are among the most studied. Chew~\cite{Chew86,Chew89} showed that $O(n)$ edges are enough to get constant stretch. That is, a graph with constant average degree is enough in order to get a spanner with constant stretch for point sets on the  Euclidean plane. This result was refined in \cite{Clarkson87,Keil88} as follows: for any $\epsilon>0$ and a point set $P$ on the Euclidean plane, one can get a $(1+\epsilon)$-spanner $G$ with $O(n/\epsilon)$ edges. In~\cite{LS22} it was shown that the dependence on $\epsilon$ is optimal.  For the Euclidean $d$-space, the situation is similar: we have $(1+\epsilon)$-spanners with  $O(n/\epsilon^{d-1})$ edges~\cite{RS91,ADDJS93}, which is also tight~\cite{LS22}. In \cite{Aron08}, the authors constructed, for any $0\le k<n$, planar spanners with stretch $O(n/k+1)$ and $n+k-1$ edges. They also showed that the dependence between $k$ and the stretch is tight up to a constant.

A related notion is that of a tree cover. Trees are the simplest connected graphs, and it is tempting to use them to approximate metrics. Unfortunately, one tree typically cannot give good approximations. It is known (and not difficult to show) that if $G$ is a tree, then its stretch cannot be better than $\Omega(n)$ for certain $n$-element point sets on the plane. A natural question along this path  is: how good an approximation we could get by using several trees $T_1,\ldots, T_\ell$ instead of one? Consider a metric space $(X,\phi)$, a finite set $P\subset X,$ and a collection of $\ell$ spanning trees $\mathcal T:=\{T_1,\ldots, T_\ell\}$ on $P$. Define the metric space on each $T_i$ as in the case of spanners. Next, for any two points $x,y\in P$, define $\rho_{\mathcal T}(x,y) = \min_{i\in[\ell]} \rho_{T_i}(x,y)$. We call such a collection $\mathcal T$ of trees a {\it tree cover}. As in the case of spanners, we are interested in tree covers of small stretch $t$, i.e., so that $\rho_{\mathcal T}(x,y)\le t \phi(x,y)$ with some small $t$.

Small tree covers are of significant algorithmic interest since they allow to reduce different distance-related problems to the case of trees, see \cite{KLMS22,ChangCLMST23,ChangCLMST24}. There is a series of results concerning tree covers for Euclidean space that is in parallel with the results on spanners. One of the first results is the Dumbbell theorem \cite{ADM+95}, which guarantees the existence of a tree cover for a point set in $\mathbb R^d$ of stretch $1+\epsilon$
using $O_d(\epsilon^{-d} \log(1/\epsilon))$ trees. One of the corollaries of this result is an efficient algorithm ($O(n\log n)$ time, $O(n)$ space in fixed dimension and for a fixed $\epsilon$) for constructing Euclidean spanners with several desirable properties such as bounded degree and optimal weight. The exponent of $\epsilon$ was later improved to the optimal value $1-d$ in \cite{chang_optimal_2024}. The Dumbbell theorem was generalized to spaces with doubling dimension $d$ in \cite{BFN22}. In \cite{CCLMST23} the authors got a tree cover with stretch $(1+\epsilon)$ using $O(\epsilon^{-3})$ trees for planar graphs.

In this paper, we study tree covers in the regime when the stretch is constant. Specifically, we answer the following question.
\begin{quote} Given a point set $P\subset \mathbf R^2$, what is the smallest size of a tree cover with  stretch $O(1)$ with respect to the Euclidean metric induced on $P$?  \end{quote}
Previously, it was known that $3$ trees are sufficient (this follows from the shifted quadtree construction of \cite{Chan98}), but $1$ tree has stretch $\Omega(n)$ in the worst case (we present a simple argument in Section~\ref{sec1tree}). In this paper, we resolve this question and show that $2$ trees are sufficient as well.

\begin{theorem}\label{thm1}
    For any set $P\subset \mathbb R^2$ of $n$ points there exists a tree cover $\mathcal T$ of size two   with stretch $O(1)$ with respect to the Euclidean distance on $P.$
\end{theorem}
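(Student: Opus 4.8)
The plan is to realise each of the two trees as a \emph{net tree} (equivalently, a shifted-quadtree spanner tree) built on top of a hierarchy of partitions of the plane, and to reduce the whole statement to a purely combinatorial-geometric property of these hierarchies. Recall the standard construction: fix a hierarchy $\mathcal H=\{\mathcal P_i\}_{i\in\Z}$ in which $\mathcal P_i$ is a partition of the plane into cells of diameter $\Theta(2^i)$, each cell of $\mathcal P_{i-1}$ being contained in a cell of $\mathcal P_i$ (laminarity). Choose for every cell a representative point of $P$ lying in it, join the representative of each cell to the representative of its parent, and attach every point of $P$ to the representative of the smallest cell containing it. Edges receive Euclidean length, so the path from a point $u$ up to the representative of any cell of diameter $D$ containing $u$ has length $\sum_{2^i\le D}\Theta(2^i)=\Theta(D)$ by a geometric-series argument. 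Consequently, for a pair $u,v$ the distance in this tree is $\Theta(\operatorname{diam} C_{uv})$, where $C_{uv}$ is the smallest cell of $\mathcal H$ containing both. All of this is routine; the only thing that matters for the stretch is the diameter of $C_{uv}$.

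Thus Theorem~\ref{thm1} reduces to the following covering statement, which is the heart of the matter: there exist \emph{two} laminar hierarchies $\mathcal H^{(1)},\mathcal H^{(2)}$ as above such that for every pair $u,v\in\R^2$ at least one of them contains both points in a common cell of diameter $O(\phi(u,v))$. I would first record why a single scale cannot be handled by two partitions, which explains why the naive approach yields three trees. At a fixed scale $s$, declare $u,v$ with $\phi(u,v)\le s$ \emph{captured} by a partition if no cell boundary separates them. A short local analysis shows that a point fails to be captured by \emph{either} of two partitions precisely when their two boundary networks \emph{cross transversally} near it; since each network must cut the plane into bounded cells, both are ``spanning'', and two mutually offset spanning networks are forced to cross. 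This is exactly the obstruction that makes two shifted grids fail and forces the third shift in Chan's construction.

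The way out, and the step I expect to be the main obstacle, is to exploit the freedom across scales: a pair $u,v$ need not be captured at the single scale $\phi(u,v)$, but only in \emph{some} cell of diameter $O(\phi(u,v))$, i.e.\ at any one of a constant number of consecutive levels. Hence a pair is bad only if it straddles a cell boundary at \emph{every} level of this window in \emph{both} hierarchies. The plan is therefore to design the two hierarchies so that the unavoidable ``crossing regions'' of $\mathcal H^{(1)}$ and $\mathcal H^{(2)}$ at a given scale are resolved one or two levels away, by shifting and brick-offsetting the partitions both from level to level and between the two hierarchies, so that no small pair can remain on the wrong side of a boundary throughout an entire window of levels in both hierarchies simultaneously. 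Making this precise---choosing the cell shapes and the level-to-level offsets, maintaining laminarity while doing so, and proving that every pair is captured within a constant factor of its own distance---is where essentially all the difficulty lies; once the covering statement is established, the stretch bound, and hence Theorem~\ref{thm1}, follows from the net-tree analysis above.
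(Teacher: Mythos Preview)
Your proposal is a plan, not a proof. You correctly reduce the question to a covering statement about two laminar hierarchies, and you correctly identify the single-scale obstruction that forces Chan's third shift. But you then write that designing the hierarchies and proving the covering property ``is where essentially all the difficulty lies,'' and stop. That difficulty \emph{is} the theorem: everything before it is standard net-tree bookkeeping, and nothing after it is supplied. No candidate hierarchies are exhibited, laminarity is not verified for any concrete choice, and no capture argument is even sketched. Whether your framework of two \emph{independent} laminar hierarchies can be made to work at all is itself unclear; the crossing obstruction you describe does not obviously dissolve just by passing to a constant window of levels, since laminarity ties the boundary networks at consecutive levels together rather than letting you offset them freely.

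For comparison, the paper does not use two independent hierarchies. It builds a \emph{single} recursive triangulation of a large right isosceles triangle (repeatedly halving each triangle by its height) and threads the two trees through that one hierarchy by \emph{parity of level}: heights drawn at odd steps go into the red tree, those at even steps into the blue tree. For a pair $x,y$, the relevant object is not the smallest common cell but the first height $h$ that separates them; $h$ has a definite colour, the segment $xy$ meets $h$ at a point $p$, and the tree of that colour routes $x\to p\to y$ inside the two child triangles. A short geometric lemma (law of sines plus an induction down the triangulation) bounds each leg by $5$ times its Euclidean length, after which Steiner points are removed via Gupta's theorem. The idea your outline is missing is exactly this: let the two trees share one hierarchy and separate them by level parity, rather than trying to desynchronise two separate hierarchies.
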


Note that the assumption on $\mathcal T = \{T_1,T_2\}$ implies that the graph $T_1\cup T_2$ is an $O(1)$-spanner for $P$, but is actually much more restrictive.  We can take $O(1)$ to be equal to $40$ in our argument. This could further be improved to $20$ with an argument adjusting that of Gupta \cite{Gupta01}, and probably beyond, but we omit the technical details for clarity.

Due to a very regular nature and simple description of the provided trees, this construction could be potentially useful for different algorithmic tasks for planar sets, such as routing.

A big question in this direction is what is the minimum size of a tree cover for a point set in $\mathbb R^d$. This problem is very challenging, and we did not manage to improve the trivial lower and upper bounds of $2$ and $d+1$, respectively. However, we managed to progress on a slight variant of this problem: the minimum size of a strong tree cover. Namely, we proved that at least $(d+1)/2$ trees are necessary in order to achieve constant stretch. We give precise definitions and results in Section~\ref{sechigh}. We note that the construction of $d+1$ trees with constant stretch from \cite{Chan98} is actually a strong tree cover. In the other direction, we weakened the notion of a tree cover to a low-distance tree cover and sketch a constructon of a low-distance tree cover of size $3$ in $\mathbb R^3.$

In the next section, we prove Theorem~\ref{thm1}. In Section~\ref{sechigh}, we discuss the problem in higher dimensions.

\section{Proof of Theorem~\ref{thm1}}
Denote $X_n = \{(a,b)\in \mathbb R^2: a,b\in [n]\}$. The crucial step of the proof is the following auxiliary theorem, proved in Section~\ref{sec2}.
\begin{theorem}\label{thm2}
    For any $n\in \mathbb N$, there exist a family of planar trees $\mathcal T=\{T_1,T_2\}$ and a point set $L\subset V(T_1)\cap V(T_2),$ such that:
    \begin{enumerate}
        \item[(i)] for any $x,y\in L$ we have $d_\mathcal T(x,y)\le  5\|x-y\|_2+12.$
        \item[(ii)] For any $z\in X_n$ there exists a point $x\in L$ such that $\|z-x\|_2\le 3.$
    \end{enumerate}
\end{theorem}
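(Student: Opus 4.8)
The plan is to build the two trees from two \emph{complementary hierarchical decompositions} of the grid, with the net $L$ sitting at the leaves of both. First I would reduce to the case $n=2^m$ by padding, so that $X_n$ embeds in a dyadic square $[0,2^m]^2$. For a single recursive decomposition I would repeatedly halve the current axis-parallel rectangle, at each step placing a Steiner \emph{hub} at the center of the rectangle and joining it by an edge to the hubs of its two children; the leaves, at the finest scale, carry the net points. Giving the edge from a rectangle of diameter $D$ to its child the corresponding Euclidean length, the distance in such a tree between two points lying in a common rectangle of diameter $D$ is $O(D)$: one climbs to the hub of the smallest common rectangle and descends. Thus, within one tree, $d_{\mathcal T}(x,y)$ is controlled by the diameter of the smallest rectangle of the decomposition containing both $x$ and $y$.

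The two trees $T_1,T_2$ would arise from two decompositions chosen so that their cut structures are \emph{out of phase} --- for instance, alternating the split direction in opposite orders, or applying a shift, so that the ``cut lines'' of $T_2$ avoid those of $T_1$. The key step, and the heart of the theorem, is a \emph{locality lemma}: for every pair $x,y\in L$ there is an $i\in\{1,2\}$ such that the smallest rectangle of the $i$-th decomposition containing both $x$ and $y$ has diameter $O(\|x-y\|_2)$; equivalently, no pair is ``split too early'' in both decompositions. Granting this, property (i) follows by summing the geometric edge lengths along the climb-and-descend path and bounding the resulting geometric series, which is where the explicit constants would be extracted: the additive $12$ absorbs the two short links from $x,y$ to the nearest leaves, while the multiplicative $5$ comes from the ratio between a rectangle's diameter and the separation its scale guarantees. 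Property (ii) is then arranged directly, by taking the leaf net fine enough that every $z\in X_n$ lies within distance $3$ of some leaf of $L$.

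The main obstacle is exactly the locality lemma with only \emph{two} trees. The shifted-quadtree approach behind the previously known bound uses $d+1=3$ decompositions precisely to guarantee that every pair is well-centered --- far from all cut lines --- in at least one of them, and a pigeonhole argument suggests that $3$ shifts are genuinely needed in the plane if one insists on square dyadic cells. To drop to $2$ one must exploit planarity together with the freedom afforded by Steiner hubs: I would argue that the pairs that are bad for $T_1$, namely those straddling a coarse cut of the first decomposition, all lie in thin slabs around a \emph{one-dimensional} family of lines, and that a single second decomposition, whose hubs and spines run \emph{along} those lines, captures all of them at the correct scale. Making this work simultaneously for the horizontal and vertical cut families of $T_1$, while keeping the stretch bounded by the small constant $5$, is the delicate point; it forces a geometrically explicit placement of the hubs rather than a generic recursive scheme, and it is where I expect the bulk of the technical work --- including the verification that both resulting graphs are in fact planar trees --- to lie.
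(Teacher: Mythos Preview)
Your plan correctly isolates the crux---a ``locality lemma'' for only two hierarchies---but does not actually resolve it, and the route you sketch does not get around the obstruction you yourself flag. With two \emph{axis-parallel} dyadic decompositions, the bad pairs for $T_1$ are those straddling a cut line at a scale much larger than their Euclidean distance; but such cuts occur in \emph{both} coordinate directions at \emph{every} scale, so these bad slabs do not form a ``one-dimensional family of lines'' that a single second decomposition could follow. Any second axis-parallel decomposition has its own cuts, and a pair near a corner of both grids is bad for both---this is exactly why the shifted-quadtree argument needs $d+1=3$ shifts. Your fallback of placing $T_2$'s spines along $T_1$'s cuts is too vague: you would need spines along horizontal \emph{and} vertical cuts at every dyadic scale, which is again a $2$-parameter family and does not obviously assemble into a single tree of bounded stretch.

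The paper's construction is qualitatively different and sidesteps this entirely. It uses a \emph{single} recursive subdivision of a right isosceles triangle (by repeatedly dropping the altitude to the hypotenuse), and colors the altitude added at step $i$ red or blue according to the \emph{parity} of $i$. Thus $T_1$ and $T_2$ are not two competing hierarchies but the odd and even levels of the \emph{same} one. For any pair $\ell_1,\ell_2$, the minimum-order separating altitude $h$ has a definite color; the key technical lemma shows that in the tree of that color, a point inside a triangle can be routed to any point on a same-color cathetus with stretch at most~$5$, so routing $\ell_1\to h\to \ell_2$ gives $d_{\mathcal T}(\ell_1,\ell_2)\le 5\|\ell_1-\ell_2\|_2$. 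The set $L$ is taken to be the red leaves, and $B$ is patched to $B'$ by attaching each red leaf to a blue neighbor at distance~$1$; the two triangle inequalities this costs produce the additive $+12$. No second decomposition, no phase shift, and planarity of both trees is immediate from the nested-triangle picture.
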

\textbf{Remark} Our proof gives $d_\mathcal T(x,y)\le 5\|x-y\|_2+12$, which is $(5+\epsilon)\|x-y\|_2$  for points $x,y\in L$ that are far apart. The multiplicative constant $5$ is the best possible for the construction. \\

Equipped with Theorem~\ref{thm2}, we can prove Theorem~\ref{thm1}. But first we need to introduce some definitions. In the definition of the spanner $G$ and the tree cover $\mathcal T$ above, we assumed that the vertex set of the graphs coincides with $P$. In Theorem~\ref{thm2}, on the other hand, the trees connecting points in $L$ may have other vertices in $X_n.$ Let us adjust the definitions and work with weighted graphs  $G=(Q,E,w)$ with $P\subset Q$. Concretely, assume that we are given a metric space $(X, \phi)$ and a point set $P\subset X$. Consider an edge-weighted graph $G = (Q, E, w)$ with $P\subset Q\subset X$, where  every edge $(p,q)\in E$ has weight $w(p,q) = \phi(p,q)$, and the distance $\rho_G(x,y)$ between any two points $x,y\in P$ is the shortest path distance between $x$ and $y$ in $G$.  We say that $G$ is a {\it Steiner $t$-spanner for $(P,\phi)$} if $\rho_G(x,y)\le t \phi(x,y)$ for any  $x,y\in P.$   The difference with the previous definition is that we allow to use {\it Steiner points,} that is, points from $X$ that do not belong to $Q$.
Similarly, we define a {\it Steiner tree cover} by allowing the trees to use Steiner points.

We are ready to prove Theorem~\ref{thm1}. Take an arbitrary set of points $P\subset \mathbb R^2$ and scale it so that the minimal distance in $P$ is at least $D$. Translate it and choose $n\in \mathbb N$ so that for any $x=(x_1,x_2)\in P$ we have $0\le x_i\le n.$ Apply Theorem~\ref{thm2} and find a point set $L$ and a tree cover $\mathcal T  =\{T_1,T_2\}$ satisfying properties (i), (ii). Connect each point $p$ of $P$ to the closest point $x(p)$ in $L$, breaking ties arbitrarily. The edge $px(p)$ gets weight equal to $\|p-x(p)\|_2.$ We add this edge to both trees $T_1,T_2$. This way, we obtain two new trees $T_1',T_2'$, which vertex sets include $P.$ By property (ii) and the triangle inequality, we have $\|p-x(p)\|_2\le 4\le \frac 4D\|x(p)-x(q)\|_2$ for any $p\ne q\in P$.  Let us check that $\mathcal T' = \{T_1',T_2'\}$ is a Steiner tree cover with stretch $O(1).$ Assume for concreteness that (i) holds in the following form: $d_\mathcal T(x,y)\le C\|x-y\|_2$ with some $C>0.$ Using the triangle inequality and property (i), we have
$$\|p-q\|_2\ge (1-8/D)\|x(p)-x(q)\|_2 \ge \frac{(1-8/D)}{5+12/D} d_{\mathcal T}(x(p),x(q)) \ge $$
$$\ge \frac{(1-8/D)}{5+12/D}\big(d_{\mathcal T'}(p,q)-8\big)\ge \big(\frac{1}{5}-\epsilon\big) d_{\mathcal T'}(p,q)$$
for any $\epsilon$, provided $D$ is large enough.

Gupta \cite{Gupta01} proved a `Steiner point removal' result stating that, given a tree metric space $T$ and a point set $P\subset V(T)$, one can construct a tree $T'$ with $V(T') = P$ such that the weight of any edge in $T'$ is equal to the shortest path distance between these vertices in $T$ and, moreover, the stretch of $T'$ w.r.t. $T$ is at most $8$. Applying it to each of $T_1',T_2'$ and the set of points $P$, we get a new tree cover $\mathcal T^* = \{T_1^*, T_2^*\},$ such that $V(T^*_1) = V(T^*_2) = P$ and, moreover, $d_{\mathcal T^*}(p,q)\le 8d_{\mathcal T'}(p,q)$ for any $p,q \in P.$ (Note that, when applying Gupta's result to, say, $T_1'$, we get the trees in which the weights on each edge $xy$ are the path lengths in $T_1$. Replacing this weight by the actual Euclidean length of the segment $xy$ can only decrease the distances and improve the stretch.)  Combining with the displayed inequality above, we conclude that
$$d_{\mathcal T^*}(p,q)\le (40+\epsilon) C \|p-q\|_2.$$
Since this is true for any $\epsilon>0$ for some tree on our vertex set, the inequality is true with stretch $40$ as well.

The argument of Gupta concerning Steiner point removal could be run explicitly and will give a factor of $4$ instead of $8.$ This will amount to getting stretch $20.$ It is possible to go below that, but getting significantly below that would require a non-trivial amount of effort put into optimization over Euclidean and tree distances.

\subsection{Steiner spanners and Steiner tree covers}
This is a short digression on spanners/tree covers with Steiner points. Introducing Steiner points allows reducing the size of spanners. In particular, the lower bound for the number of edges in a Steiner spanner in $\mathbb R^d$ of stretch $(1+\epsilon)$   is  $\Omega(n \epsilon^{-(d-1)/2}$) \cite{LS22}, and spanners  of such size indeed exist \cite{chang_optimal_2024}, up to a $\log 1/\epsilon$ term. Similarly, a Steiner tree cover in the Euclidean space of stretch $(1+\epsilon)$ must consist of $\Omega(\epsilon^{-(d-1)/2}$) trees \cite{LS22}, and such tree covers exist \cite{chang_optimal_2024}. At the same time, as we have seen, for constant stretch, Steiner points can be avoided \cite{Gupta01} at the expense of multiplicative factor $8$ for the stretch. Interestingly, one can improve this to $2+\epsilon$ by using $f(\epsilon)$ non-Steiner trees (i.e., producing a non-Steiner tree cover for a tree), and $2$ is a sharp threshold: reaching $2$ requires $\Theta(\log n)$ trees, and reaching $2-\epsilon$ requires $\Theta(n)$ trees in the worst case \cite{Workshop23}.

\section{Proof of Theorem~\ref{thm2}}\label{sec2}
This section is split into  three subsections. In the first subsection, we shall provide the key construction. It will be  slightly different from the construction we need in order to prove Theorem~\ref{thm2}. In the second subsection, we shall analyze the metric properties of the construction. In the third section, we shall slightly modify it and give the proof of  Theorem~\ref{thm2}.
\subsection{The construction}\label{sec21}
For fixed $n = 2^m$, $m \in \N$, consider an integer triangular grid $H_n^\Delta = \{(x,y) \in \{0,\ldots, n\}^2:\, x+y \leq n \}.$
In what follows, we shall define in parallel (a) a sequence of triangulations, such that every new one refines the previous one; (b) a sequence of non-intersecting red and blue plane\footnote{A {\it plane graph} is a planar graph together with a drawing on the plane} trees with
 vertices in $H_n^{\Delta}$ such that every new red (blue) tree contains the previous one as a subgraph.
 The resulting red and blue trees  on $H_n^\Delta$  are  shown on Fig. \ref{fig:two_trees_alg_res}. They are non-intersecting and together cover all vertices of $H_n^{\Delta}.$

 \begin{figure}[ht]
\center{\includegraphics[scale=0.5, width=230pt]{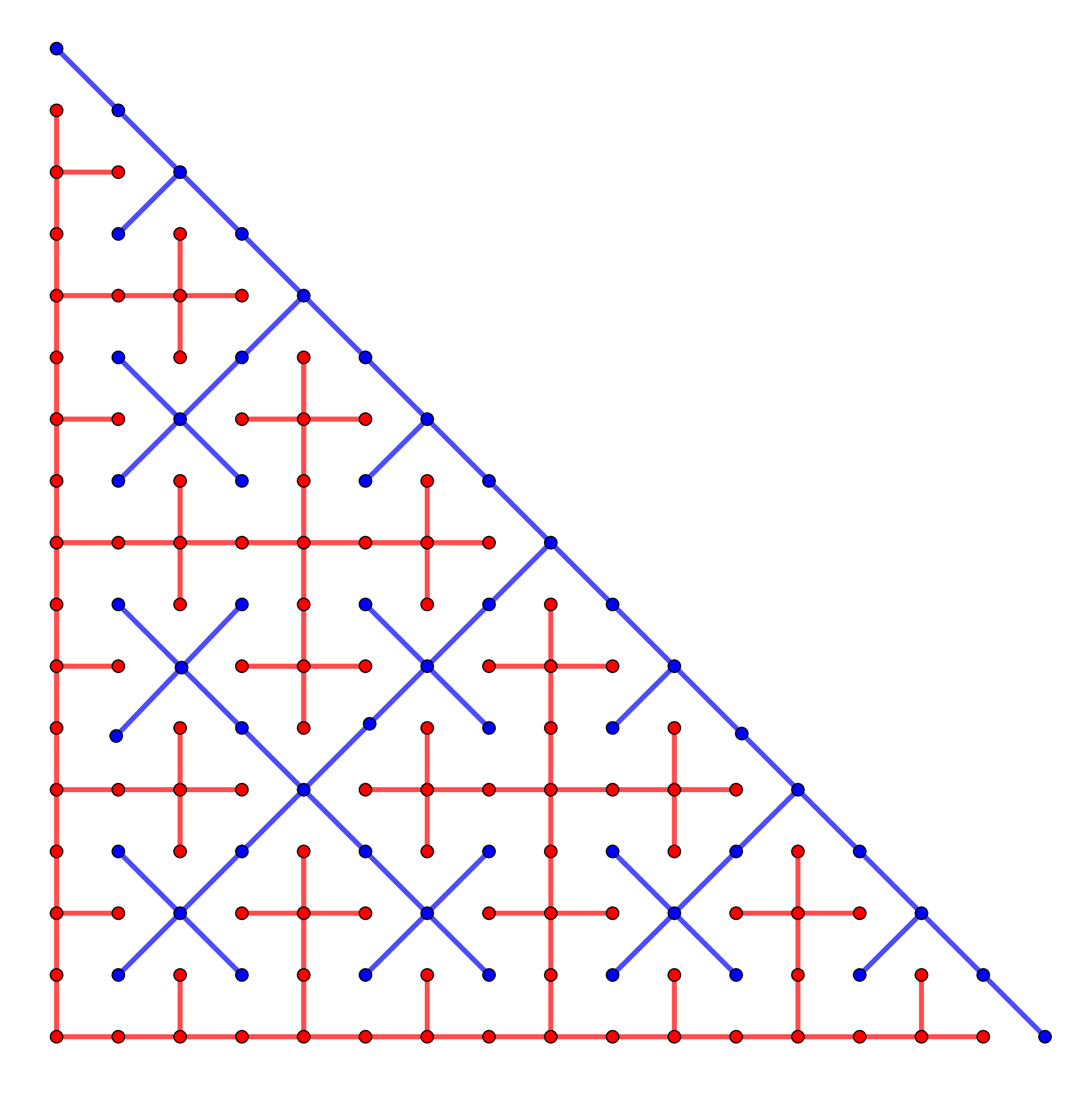}}
\caption{The red and the blue trees $\mR_1, \mB_1$ in $H_n^\Delta$ for $n = 16$}
\label{fig:two_trees_alg_res}
\end{figure}

 \textbf{Step 0 }
 \begin{itemize}
     \item The initial triangulation $\mathcal T_1$ consists of a single triangle  $\tau_{1,1}$ with vertices $(0,0), (0,n), (n,0)$.
     \item The initial blue plane tree $B_1$ is effectively the segment $[(0,n),(n,0)]$, with vertices of $B_1$ being all integer vertices on that segment. More formally, the vertex set is $(i,n-i)$ for $i=0,\ldots, n$, and edges are straight segments connecting points $(i,n-i)$ and $(i+1,n-i-1)$ for $i=0,\ldots, n-1$. In what follows, we tacitly assume that the vertices of the trees that we construct are all integer points on the segments that we add.
     \item The initial red plane tree $R_1$ is the union of segments $[(0,0),(n-1,0)]$ and $[(0,0),(0,n-1)]$.
     \item We say that the catheti of $\tau_{1,1}$ are colored red and the hypotenuse is colored blue. In what follows, we also `induce' the color of the trees that we construct on the edges of triangulations.
     \item The blue segment added to the blue tree at this step gets order $0$. The red segments get order $1$.
 \end{itemize}

\textbf{Step i }
\begin{itemize}
     \item The input to step $i$  is a family $\T_{i} = \{\tau_{i,1},\dots, \tau_{i,2^{(i-1)}} \}$  of right isosceles triangles forming a triangulation of $\tau_{1,1}$, as well as the red plane tree $R_i$ and the blue plane tree $B_i$ that together partition the integer points that lie on the boundary of triangles from $\T_i$.   For odd $i$, the catheti of triangles $\tau_{i,j}$ are colored red, and the hypotenuses are colored blue. For even $i$, the coloring is the way around.
     \item Apply the following procedure to each triangle $\tau \in \T_i$. Let $u$ be the right angle vertex of $\tau$, and let $h=[u,w]$ be the height from $u$ to the hypotenuse of $\tau$. Let $v\ne u$ be the integer node on $h$ that is closest to $u$. Let $\chi$ be the color of the hypotenuse of $\tau$ (note that it depends on the parity of $i$ only). Then add segment $m(\tau):=[v,w]$ to the tree of color $\chi$. See Fig. \ref{fig:trianle steps}. The height $h$ divides the triangle $\tau$ into two smaller triangles $\tau', \tau''$.
     \item Define $\T_{i+1} = \underset{\tau \in \T_i}{\bigcup} \{\tau', \tau'' \}$. 
     \item If, say, the color $\chi$ is red, then put $R_{i+1} = R_i\cup \bigcup_{\tau \in \T_i} \ell(\tau)$, $B_{i+1} = B_i$. If $\chi$ is blue, then the definition is symmetric.
     \item For each $\tau$, $m(\tau)$ gets order $i$. 
 \end{itemize}

 \begin{figure}[ht]
\center{\includegraphics[scale=0.5, width=220pt]{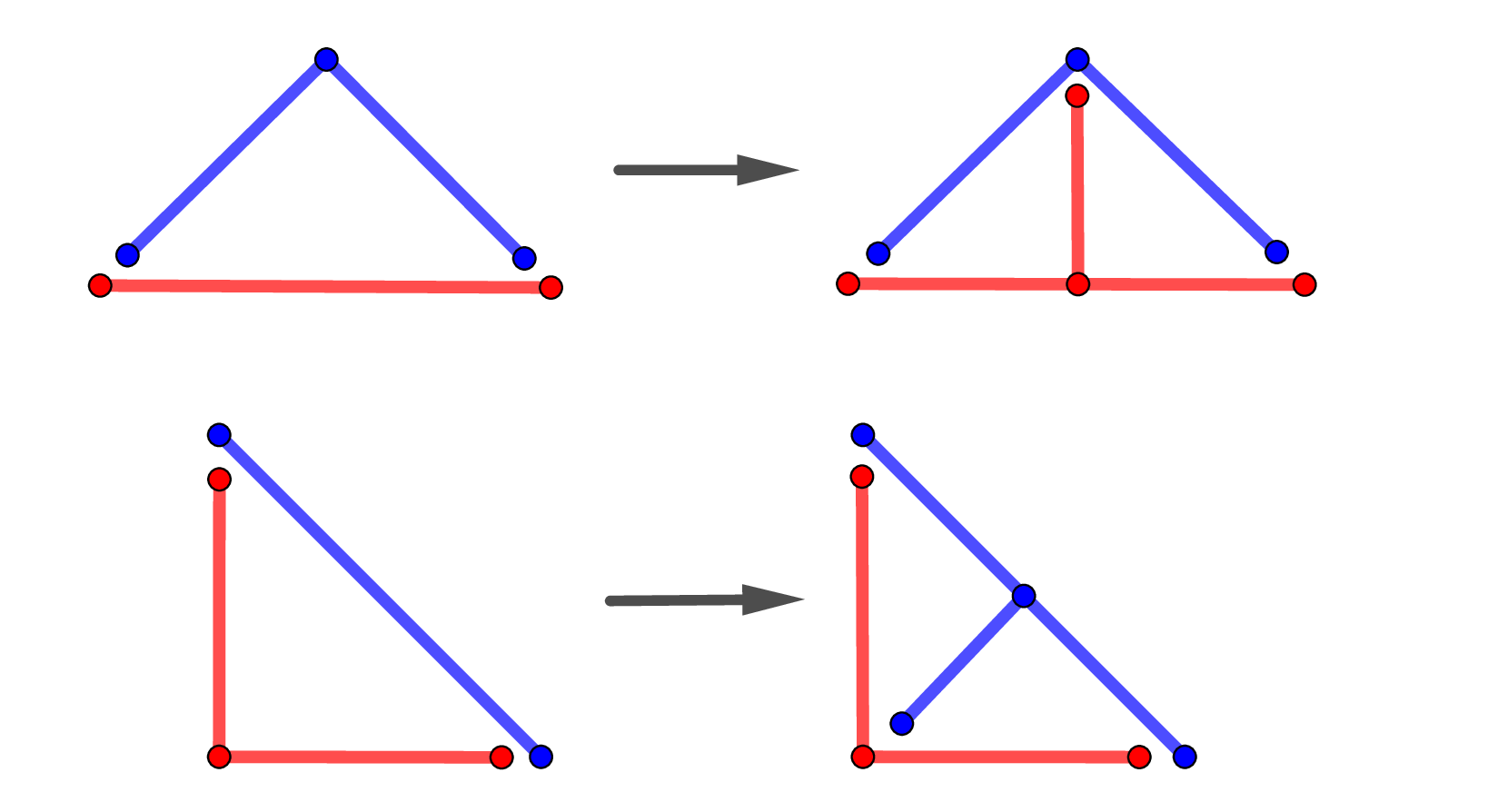}}
\caption{The steps of the algorithm for one triangle $\tau$}
\label{fig:trianle steps}
\end{figure}

 \begin{figure}[ht]
\center{\includegraphics[scale=0.5, width=300pt]{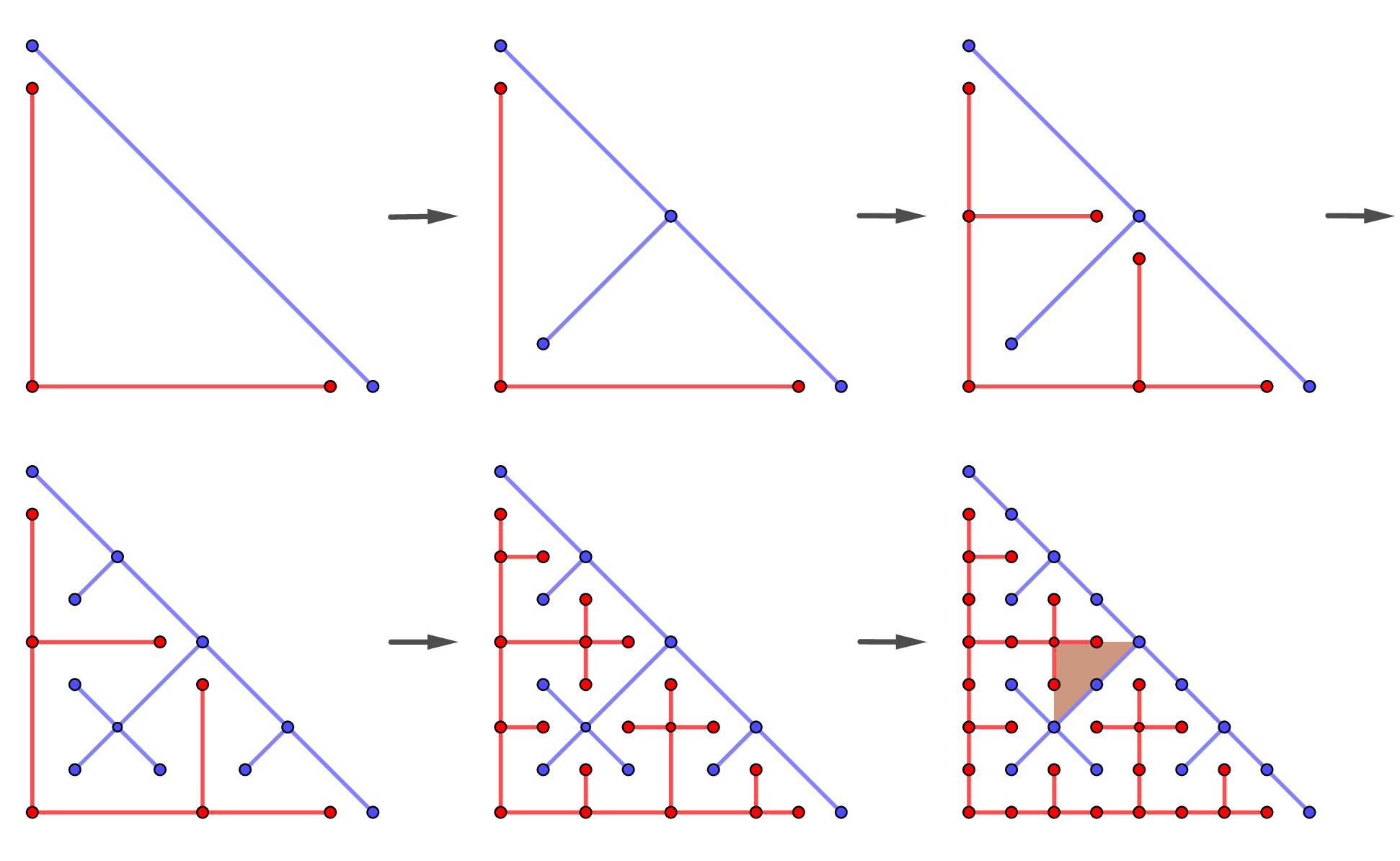}}
\caption{The steps of the algorithm and an example of an atomic triangle}
\label{fig:process}
\end{figure}

 {\bf Termination }

The algorithm stops after $2(m-1)$ steps when there are no integer points in the interior of the triangles of the triangulation. As a result, the set $\T_{2m-1}$ consists of $2^{2(m-1)}$ {\it atomic} triangles with red catheti of Euclidean length $2$ and blue hypotenuses, as is shown on Fig. \ref{fig:process}. Denote by $$\T = \underset{i\in [2(m-1)+1]}{\bigcup} \T_i$$ the set of all triangles obtained in the algorithm. We also get the red and blue trees $R:=R_{2m-1}$, $B:=B_{2m-1}.$ Let us summarize some of the properties of this construction.

\begin{observation}\label{obs}
    The following properties hold.

    (a). Every point of $H_n^\Delta$ is colored, that is, $V(R) \sqcup V(B) = H_n^\Delta$.

    (b)  Fix a triangle $\tau \in \T$. The boundary of $\tau$ contains exactly two uncolored edges, and these edges are contained on two different sides of $\tau$. So, only one of two cases is possible, as is shown on Fig. \ref{fig:noncolored}.a.

    (c). For any red node $x$ there exists a red leaf $\ell$ such that  $\|x-\ell\|_2 \leq \sqrt 5$. For any blue node $y$ there exists a red leaf $\ell$ such that $\|y-\ell\|_2 = 1$.

    (d). For any red leaf $\ell$ there exists a blue node $x(\ell)$ such that $\|\ell-x(\ell)\|_2=1$ and such that, for  every triangle $\tau\in \mathcal T$, if $\ell\in \tau$, then $x(\ell)\in \tau$.
    \end{observation}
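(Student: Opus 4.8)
The plan is to prove the four items in order by induction on the refinement steps, carrying stronger invariants than stated, and then to read off the metric facts (c) and (d) from an explicit description of the red leaves.

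For (a) I would carry the loop invariant that, after step $i$, the sets $V(R_i)$ and $V(B_i)$ are disjoint and their union is exactly the set of integer points on the boundaries of the triangles of $\T_i$. The base case is Step~$0$. For the inductive step, splitting a triangle $\tau$ by its height $[u,w]$ creates as genuinely new boundary integer points precisely the interior integer points of $[u,w]$: the foot $w$ lies on the already-colored hypotenuse, the right-angle vertex $u$ is already colored, and the segment $m(\tau)=[v,w]$ colors exactly the remaining new points once each, attaching them to the tree of color $\chi$ at the already-colored vertex $w$. Disjointness is preserved because $m(\tau)$ meets previously colored points only at $w$, whose color already equals $\chi$. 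Since atomic triangles contain no interior integer points, at termination every point of $H_n^\Delta$ lies on some atomic boundary, giving $V(R)\sqcup V(B)=H_n^\Delta$.

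For (b) I would prove the stronger assertion that every $\tau\in\T$ has exactly one uncolored edge incident to each of its two acute vertices and none incident to its right-angle vertex. This holds for $\tau_{1,1}$, whose two uncolored cathetus edges sit next to the two hypotenuse endpoints. When $\tau$ (acute vertices $a,b$, right angle $u$, foot $w$) is split into $\tau'=uaw$ and $\tau''=ubw$, the half-hypotenuses $wa,wb$ are fully colored, the segment $m(\tau)=[v,w]$ is colored while $[u,v]$ is left uncolored, and the old catheti $ua,ub$ (which become the hypotenuses of $\tau',\tau''$) carry, by the induction hypothesis, a single uncolored edge next to $a$ and next to $b$ respectively. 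Tracking these, $\tau'$ has precisely the two uncolored edges $[u,v]$ (next to its acute vertex $u$) and the one next to its acute vertex $a$, lying on two different sides; symmetrically for $\tau''$. This is the claimed invariant and in particular gives (b).

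The crux is (c) and (d). First I would record the orientation alternation of the construction: triangles of odd level have axis-parallel catheti and a diagonal hypotenuse, while those of even level have diagonal catheti and an axis-parallel hypotenuse (each level is the previous one scaled by $1/\sqrt2$ and rotated by $45^\circ$). Hence the red segments, which are added exactly at the even steps, have heights pointing in an axis-parallel direction with unit integer spacing, so their near endpoint $v$ satisfies $\|u-v\|_2=1$, where $u$ is the right-angle vertex of the corresponding triangle; moreover $u$, lying on two blue catheti, is a blue node. I would then characterize the red leaves as the two initial cathetus endpoints $(n-1,0),(0,n-1)$ together with exactly these near endpoints $v$, verifying that no later height reattaches to such a $v$ (the interfering heights always land on a foot $w$, never on $v$). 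With this in hand (d) is essentially immediate: take $x(\ell)=u$, so $\|\ell-x(\ell)\|_2=1$; since $\ell=v$ is the midpoint of the length-$2$ atomic leg incident to $u$, every atomic triangle containing $\ell$ has $u$ as a vertex, which is the co-location property (the two initial leaves being handled by pairing them with the adjacent hypotenuse corner).

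Finally, the second statement of (c) follows by the converse local analysis: up to the finitely many configurations that recur by self-similarity, every blue node is the right-angle vertex of an even-level triangle, or a hypotenuse or height neighbor of one, and so sits at distance exactly $1$ from one of the red leaves $v$ above. The first statement of (c) is a covering estimate: the red leaves appear at the atomic scale with bounded gaps, so every red node lies within $\sqrt5$ of one, checked on the finite list of local patterns around a red node, the extremal case being deep trunk vertices such as $(0,0)$ that realize $\sqrt5$ exactly. I expect the main difficulty to be exactly this leaf bookkeeping for (c) and (d): showing that the near endpoints $v$ survive as leaves through all later steps, and that the local patterns around blue and red nodes are genuinely exhausted by the self-similar case analysis, rather than any single hard inequality.
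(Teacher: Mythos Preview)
Your overall plan matches the paper's: parts (a) and (b) by induction on the refinement step, and parts (c) and (d) by inspecting the finitely many local configurations at the atomic scale, with $x(\ell)$ taken to be the right-angle vertex $u$ of the triangle whose red height created $\ell$. The paper's argument for (c) is literally ``every point lies in an atomic triangle, which has one of two explicit forms''---exactly your ``finite list of local patterns.''

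There is, however, a genuine slip in your inductive step for (b). You assert that the half-hypotenuses $[w,a],[w,b]$ are fully colored and that the inherited uncolored edge next to $a$ lies on the old cathetus $[u,a]$. Your own invariant does not guarantee this: the uncolored edge incident to the acute vertex $a$ of $\tau$ can equally well sit on the hypotenuse $[a,b]$. This already occurs for each child of $\tau_{1,1}$: there the inherited uncolored edge (the old cathetus tip $[(n-1,0),(n,0)]$) lies on the \emph{new hypotenuse}, not on a cathetus. The invariant you state---one uncolored edge at each acute vertex, none at the right angle---is correct, but the induction must split on where the inherited edge sits. Note also that your invariant by itself does not yet exclude ``both uncolored edges on the hypotenuse,'' so the ``two different sides'' clause of (b) still needs the extra remark that the freshly created gap $[u,v]$ is always on a cathetus.

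For (d) your argument shows only that every \emph{atomic} triangle containing $\ell$ has $u$ as a vertex, whereas the statement is about every $\tau\in\mathcal T$. This is an easy fix (any such $\tau$ is a union of atomic triangles, at least one of which contains $\ell$ and therefore $u$); once added, your version is at least as explicit as the paper's, which simply asserts the co-location without further justification.
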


    \begin{proof}
    The properties (a), (b) can be easily checked by induction on the step of the algorithm.
    The property (c) holds because any node $x \in H_n^{\Delta}$ is contained in an atomic triangle $\tau$, and $\tau$ has two possible forms, as is shown on Fig. \ref{fig:noncolored}.b. To show property (d), consider an atomic triangle and one of its red leaves $\ell$. In order to get $x(\ell),$ prolong the cathet that $\ell$ lies on until it hits a blue vertex. Every red leaf $\ell$ is formed when a height in some triangle $\tau$ is added to the red tree. Then $x(\ell)$ is simply the right angle vertex of $\tau$. It is easy to see that $x(\ell)$ is contained in all triangles of the triangulation  that $\ell$ is contained in.
    \end{proof}
 \begin{figure}[ht]
\center{\includegraphics[scale=0.5, width=250pt]{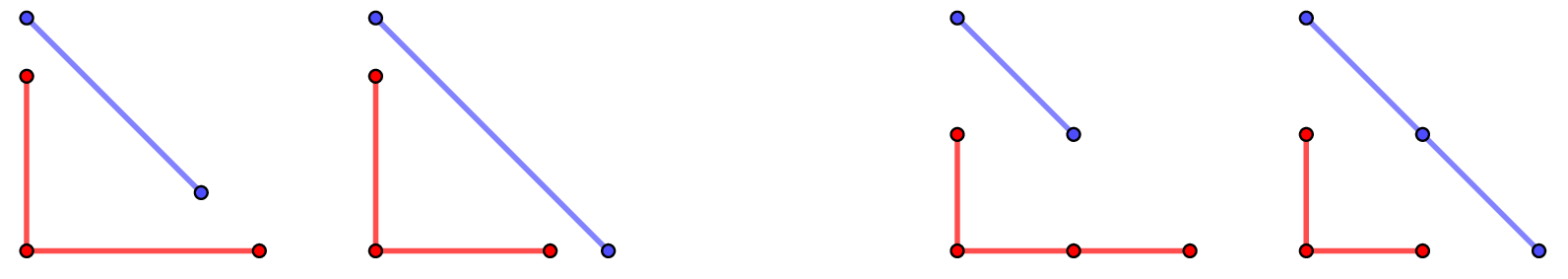}}
\caption{a) Two possible cases of the location of two uncolored edges on the boundary of triangle $\tau_{i,j}$ with red catheti and blue diagonal. b) Two possible cases of the form of the atomic triangles $\tau \in \T_i, i = 2(m-1)+1$.}
\label{fig:noncolored}
\end{figure}

\subsection{Analysis of the distances in the trees}
The edges of the trees $R,B$ naturally get their corresponding Euclidean length. That is, the edges have length $1$ in $R$ and $\sqrt 2$ in $B.$ Then, for any two vertices $x,y$ in $R$ ($B$) let $d_R(x,y)$ ($d_B(x,y)$) stand for the shortest path distance in $R$ ($B$).

The following two lemmas relate the Euclidean and tree distances between interior and boundary points of a triangle $\tau\in \mathcal T$.
\begin{lemma}\label{l:main1}
Let $\tau \in \T_i$ be a triangle obtained in the algorithm, with the hypotenuse of color $\chi\in \{R,B\}.$ Let $x \in \tau$ be an integer point of color $\chi$ and let $u$ be one of the vertices of the triangle $\tau$ that lies on the hypotenuse and is of color $\chi$ (by Observation~\ref{obs} (b), there is at least one such vertex). Then  $d_\chi(x,u)\le 2\sqrt 2\|x-u\|_2$.
\end{lemma}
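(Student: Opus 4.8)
The plan is to prove \Cref{l:main1} by downward induction on the size of $\tau$ (equivalently, on the number $(2m-1)-i$ of subdivision steps still to be performed inside a triangle $\tau\in\T_i$), and to prove it simultaneously with a companion statement for the opposite color. Before the induction I would record two elementary routing facts implied by the construction. First, every colored side of every triangle of $\T$ enters its tree as a single contiguous chain of integer points, hence is a geodesic of that tree; so if two $\chi$-points lie on a common side then $d_\chi(x,u)=\|x-u\|_2$. Second, the two catheti of a triangle meet orthogonally at the right-angle vertex $r$, so if $x$ and $u$ lie on the two different ($\chi$-colored) catheti, routing through $r$ gives $d_\chi(x,u)\le\|x-r\|_2+\|r-u\|_2\le\sqrt2\,\|x-u\|_2$ by the Pythagorean theorem. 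These two atomic moves already settle the base case, since by the termination rule atomic triangles have no interior integer points and all their $\chi$-points lie on the boundary.

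The companion statement (C) that I would carry alongside (H)$\,=\,$\Cref{l:main1} is the cathetus version: if $\sigma\in\T$ has catheti of color $\chi$, $x\in\sigma$ is a $\chi$-point and $u$ is a $\chi$-vertex on a cathetus of $\sigma$, then $d_\chi(x,u)\le C\,\|x-u\|_2$ for a constant $C$ slightly below $2\sqrt2$. Both statements are needed because $\chi$ is the hypotenuse color of $\tau$ but becomes the cathetus color of the two children $\tau',\tau''$ (their hypotenuses are the legs of $\tau$, and the hypotenuse color alternates with the parity of the level, by the coloring rule and Observation~\ref{obs}). Thus the $\chi$-tree inside $\tau$ splits into the hypotenuse, the segment $m(\tau)=[v,w]$, and the $\chi$-structures sitting inside $\tau'$ and $\tau''$ where $\chi$ is the cathetus color. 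So (H) for $\tau$ feeds on (C) for its children, (C) for $\sigma$ feeds on (H) for its children, and the two are established by a single joint induction.

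For the inductive step of (H), a $\chi$-point $x\in\tau$ lies in one of three positions. If $x$ is on the hypotenuse, the first atomic move gives $d_\chi(x,u)=\|x-u\|_2$. If $x$ lies on $m(\tau)=[v,w]$, then $x-w$ is perpendicular to the hypotenuse while $w-u$ runs along it, so $d_\chi(x,u)\le\|x-w\|_2+\|w-u\|_2\le\sqrt2\,\|x-u\|_2$. Otherwise $x$ lies in the closure of one child, say $\tau'$, in which $\chi$ is the cathetus color and the half of $\tau$'s hypotenuse contained in $\tau'$ is a cathetus of $\tau'$. If $u$ lies in that half, (C) for $\tau'$ applies directly. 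If $u$ lies in the other half, the foot $w$ is a cut vertex of the $\chi$-tree separating $x$ from $u$, so $d_\chi(x,u)=d_\chi(x,w)+\|w-u\|_2$; the point is that $x$ and $u$ lie on opposite sides of the perpendicular to the hypotenuse through $w$, so the angle at $w$ is at least a right angle and $\|x-u\|_2^2\ge\|x-w\|_2^2+\|w-u\|_2^2$. Combining this obtuse-angle relation with the companion bound for $d_\chi(x,w)$ is what collapses the two-term estimate to the factor $2\sqrt2$. The step for (C) is symmetric: $x$ on a cathetus is handled by the two atomic moves, and $x$ inside a child (where $\chi$ is now the hypotenuse color) is routed by applying (H) to that child to reach a handoff vertex on the child's hypotenuse, which is a cathetus of $\sigma$, and then finishing with a cathetus-or-corner move in $\sigma$.

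The main obstacle is constant control across the recursion: a naive composition would multiply the factor from a child's hypothesis by the $\sqrt2$ of an in-triangle move, and over the $\Theta(m)$ levels this would blow up. The whole mechanism that prevents compounding is the obtuse-angle (Pythagorean-domination) phenomenon above: the handoff/cut vertex always sits so that $\angle xwu\ge\pi/2$, turning the estimate into an additive $d_\chi(x,u)\le d_\chi(x,w)+\|w-u\|_2$ that is then compared, via $\|x-u\|_2^2\ge\|x-w\|_2^2+\|w-u\|_2^2$ and Cauchy--Schwarz, against the single quantity $\|x-u\|_2$. This forces a delicate balance of the two constants: the factor $2\sqrt2$ in (H) is recovered in the cross-half case only if the companion constant $C$ in (C) satisfies $\sqrt{C^2+1}\le 2\sqrt2$, i.e.\ $C\le\sqrt7$, while (C) must in turn close under its own recursion fed by the factor $2\sqrt2$ of (H). Verifying that a valid handoff vertex always exists, that the obtuse-angle relation holds in every configuration of the right-isosceles subdivision, and that neither (H) nor (C) ever overshoots its target constant is the technical heart of the argument; the remaining case work is routine bookkeeping over the three positions above.
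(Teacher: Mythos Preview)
Your two--statement induction does not close at the constant $2\sqrt2$, and in fact cannot close at any finite constant with the Cauchy--Schwarz/obtuse-angle mechanism you describe. Trace the constants: for (H) to hold with factor $2\sqrt2$ in the cross-half case you need the companion constant $C$ in (C) to satisfy $\sqrt{C^2+1}\le 2\sqrt2$, i.e.\ $C\le\sqrt7$. Now examine (C) when $u$ is the right-angle corner $r_\sigma$ of $\sigma$ and $x$ lies in a child $\sigma'$. In $\sigma'$ the color $\chi$ is the hypotenuse color, and $r_\sigma$ is an endpoint of that hypotenuse, so the only tool available is (H) itself, giving $d_\chi(x,r_\sigma)\le 2\sqrt2\,\|x-r_\sigma\|_2$. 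This forces $C\ge 2\sqrt2>\sqrt7$, contradicting the requirement above. (The same obstruction appears if $u$ is an acute corner of $\sigma$: it is still a hypotenuse endpoint of the child containing $x$.) More generally, any scheme that carries a \emph{ratio} through the recursion and combines steps via $Ca+b\le\sqrt{C^2+1}\sqrt{a^2+b^2}$ strictly increases the constant at each composition, so no fixed pair $(\alpha,\beta)$ for (H),(C) can be stable.

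The paper sidesteps this entirely by decoupling the two sides of the inequality. Instead of a ratio induction, it proves an \emph{absolute} bound: if $\tau'$ is any triangle with $\chi$-hypotenuse of length $f$ and $u$ is a hypotenuse endpoint, then $d_\chi(u,x)\le f$ for every $\chi$-point $x\in\tau'$ (a one-line induction: walk along the hypotenuse to the foot $D$ of the altitude, then recurse into a half-size triangle). Separately, choosing $\tau'$ to be inclusion-minimal with $u$ a hypotenuse endpoint and $x\in\tau'$ forces $x$ out of the near quarter-triangle, hence $\|x-u\|_2\ge f/(2\sqrt2)$. Dividing gives $d_\chi(u,x)\le 2\sqrt2\,\|x-u\|_2$ with no accumulation of constants and no companion statement.
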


\begin{proof}
For concreteness, assume that $\chi$ is red  (the proof for the blue color is symmetric). Then the hypothenuse is red. Take the inclusion-minimal triangle $\tau'\subset \tau$ such that, first, $u$ is the vertex (i.e., endpoint) of the hypotenuse $h$ of $\tau'$ and, second, $x\in \tau'$.  Assume that $\tau'$ is the triangle  $ABC$ as on Fig. \ref{fig:triangle abc}, with $A = u$. (The vertex $C$ may be of blue color, but this does not affect the analysis.) Let $f$ denote the Euclidean length of $h$: $f:=\|A-C\|_2$. Then $x$ cannot be located inside the triangle $ADE$ by minimality of $\tau'$, and the closest point to $A=u$ in $ABC\setminus ADE$ w.r.t. Euclidean distance is $E$, which is at distance $\frac 1{2\sqrt 2}f$ from $A.$ At the same time, it is easy to see by induction that $d_R(u,x)\le f$. Indeed, let us traverse $h$ from $u$ towards $x$ and turn at some point $p$. If $p=D$ then we note that $DB$ has the same length as $DC$ and is a hypotenuse of triangles $EDB, FDB$. We then apply induction to one of the triangles $EDB, FDB$ that $x$ belongs to. Otherwise, $x$ is inside the triangle $DFC$, and we can apply induction to that triangle. We get $d_R(u,x)\le f\le 2\sqrt 2 \|A-E\|_2\le 2\sqrt 2 \|u-x\|_2$.
\end{proof}
     \begin{figure}[ht]
\center{\includegraphics[scale=0.5, width=110pt]{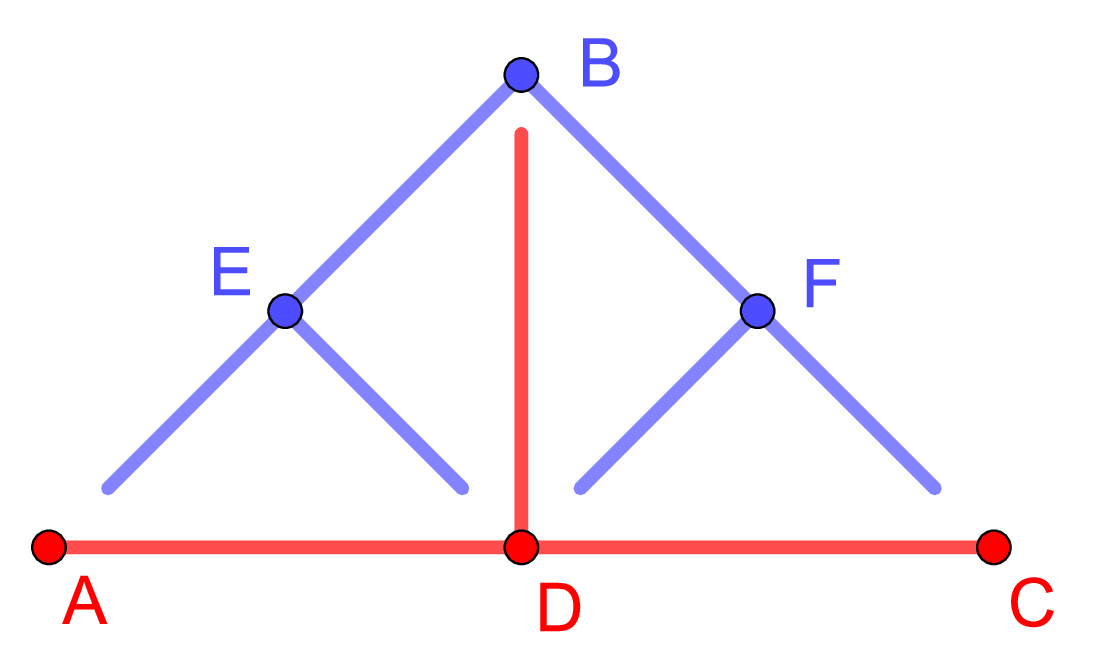}}
\caption{Triangle $\tau'$ with vertices $A,B,C$}
\label{fig:triangle abc}
\end{figure}

\begin{lemma}\label{l:main2}
Let $\tau \in \T_i$ be a triangle obtained in the algorithm, with catheti of color $\chi\in \{R,B\}.$ Let $x \in \tau$ be an integer point of color $\chi$ and let $y$ be an arbitrary (not necessarily integer) point lying on one of the catheti of $\tau$, inside one of the edges of color $\chi$. (That is, $y$ lies on the drawing of the tree of color $\chi$.) Let us include $y$ in the metric space induced by the tree of color $\chi$, by assigning Euclidean distances from it to the endpoints of the edge on which it lies. Then  $d_\chi(x,y)\le 5\|x-y\|_2$.
\end{lemma}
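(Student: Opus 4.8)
The plan is to prove Lemma~\ref{l:main2} by induction on the side length of $\tau$ (equivalently, on the number of subdivision steps separating $\tau$ from the atomic level), run simultaneously with the following companion ``hypotenuse'' statement: if $\sigma\in\T$ has a hypotenuse of color $\chi$, $x\in\sigma$ is an integer point of color $\chi$, and $y$ lies inside a $\chi$-colored edge of the hypotenuse of $\sigma$, then $d_\chi(x,y)\le 5\|x-y\|_2$. Running the two statements together is natural because a single subdivision turns a leg into a hypotenuse of a child triangle and vice versa, so each version reduces to the other on strictly smaller triangles. The only external input is Lemma~\ref{l:main1}, which I use in the form $d_\chi(x,v)\le 2\sqrt2\,\|x-v\|_2$ whenever $v$ is a $\chi$-colored vertex lying on the hypotenuse of a triangle of $\T$ that contains $x$.

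Assume $\chi=R$ and, for the leg version, let $\tau$ have red legs $[u,a],[u,b]$ and $y\in[u,a]$. Dropping the height $[u,w]$ splits $\tau$ into $\tau'=(u,w,a)$ and $\tau''=(u,w,b)$, each of which has a red hypotenuse (namely $[u,a]$ and $[u,b]$). Since $y$ now lies on the red hypotenuse $[u,a]$ of $\tau'$, if $x\in\tau'$ the pair $(x,y)$ satisfies the hypothesis of the companion statement on the smaller triangle $\tau'$, and we finish by induction. If instead $x\in\tau''$, I route the red path through $u$: because $u$ and $y$ lie on the same leg, $d_R(u,y)=\|u-y\|_2$, and since $u$ is a red endpoint of the red hypotenuse $[u,b]$ of $\tau''\ni x$, Lemma~\ref{l:main1} gives $d_R(x,u)\le 2\sqrt2\,\|x-u\|_2$, whence $d_R(x,y)\le 2\sqrt2\,\|x-u\|_2+\|u-y\|_2$. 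The companion statement is reduced in mirror-image fashion: one subdivision cuts the red hypotenuse of $\sigma$ into two red legs of the children, the ``aligned'' sub-case (where $x$ lies in the child whose leg carries $y$) becomes the leg version on a smaller triangle, and in the ``separated'' sub-case the path routes through the foot $w'$ of the new height, which after one further subdivision is a red hypotenuse endpoint of a triangle containing $x$, so Lemma~\ref{l:main1} applies again.

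What remains in every separated sub-case is the elementary estimate: bounding $(2\sqrt2\,\|x-v\|_2+\|v-y\|_2)/\|x-y\|_2$ for the routing vertex $v$, where the vectors $v\to x$ and $v\to y$ make an angle $\varphi$ that the geometry forces to be at least $90^\circ$ for the hypotenuse version and at least $45^\circ$ for the leg version. Writing $r=\|x-v\|_2$, $\ell=\|v-y\|_2$ and $\|x-y\|_2^2=r^2+\ell^2-2r\ell\cos\varphi$, one optimizes $(2\sqrt2\,r+\ell)/\sqrt{r^2+\ell^2-2r\ell\cos\varphi}$ in $t=r/\ell$; for $\varphi\ge 90^\circ$ the value never exceeds $\sqrt{8+1}=3$, comfortably below $5$, while as $\varphi\downarrow 45^\circ$ the supremum is a constant marginally larger than $5$, approached in the limit where $x$ sits against the dropped height along the $45^\circ$ direction. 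The base case is a finite check on the two atomic shapes of Observation~\ref{obs}(c). I expect the main obstacle to be exactly this tight regime: the naive routing through $v$ yields a constant slightly above $5$ once $\varphi$ is allowed to reach $45^\circ$, so to secure the clean bound $5\|x-y\|_2$ one must use that $x$, being a red (hence non-height) integer point, keeps $\varphi$ bounded away from $45^\circ$ — or absorb this boundary effect by forcing one extra subdivision whenever $x$ lies adjacent to the height. Getting this quantitative step exactly right, rather than the case bookkeeping, is the delicate part, and the remark after Theorem~\ref{thm2} that $5$ is best possible confirms that this configuration is genuinely extremal.
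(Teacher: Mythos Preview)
Your inductive framework is sound but unnecessarily elaborate. The paper's proof is far more direct: it simply takes $u$ to be the \emph{first turning point} on the tree path from $y$ to $x$, i.e.\ the vertex where the path leaves the cathetus carrying $y$. Then $d_\chi(u,y)=\|u-y\|_2$ trivially, $u$ is automatically a $\chi$-colored hypotenuse endpoint of a triangle of $\T$ containing $x$ (so Lemma~\ref{l:main1} gives $d_\chi(u,x)\le 2\sqrt2\,\|u-x\|_2$), and $\angle yux\ge\pi/4$ by the geometry of the subdivision. No induction, no companion statement: this one-step routing already yields $d_\chi(x,y)\le 2\sqrt2\,\|u-x\|_2+\|u-y\|_2$, which is exactly the expression you reach after unfolding your recursion down to the first ``separated'' level. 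Your leg/hypotenuse alternation just re-discovers this turning point the long way round.

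Your worry about the constant, however, is on target. With only the constraint $\angle yux\ge\pi/4$, the supremum of $(2\sqrt2\,\|u-x\|_2+\|u-y\|_2)/\|x-y\|_2$ is $\sqrt{26}\approx 5.10$: writing $\theta=\angle yux$ and optimizing gives $\sqrt{9+4\sqrt2\cos\theta}/\sin\theta$, maximized at $\theta=\pi/4$ with $\tan(\angle uyx)=5$, not at the right-angled configuration the paper names. So the paper's ``basic calculus'' sentence overshoots in the same way you anticipated; both routes actually deliver $d_\chi(x,y)\le\sqrt{26}\,\|x-y\|_2$, which is entirely adequate for Theorem~\ref{thm2} and the downstream constants. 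Squeezing the sharp $5$ (which Figure~\ref{fig:xyu} shows is the true worst case) would require exploiting that the $2\sqrt2$ of Lemma~\ref{l:main1} is never attained simultaneously with $\angle yux=\pi/4$, a refinement neither proof carries out.
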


Interestingly, the bound $5$ in the lemma (and thus the multiplicative constant $5$ in the theorem) is tight, as is illustrated on Figure \ref{fig:xyu}.

\begin{figure}[ht]
\center{\includegraphics[scale=0.5, width=100pt]{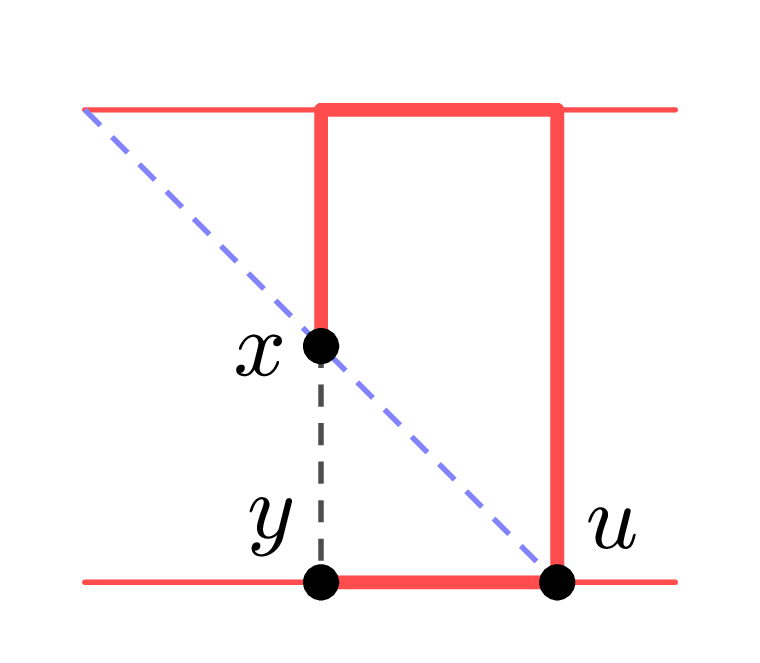}}
\caption{An illustration that the bound $5$ in Lemma~\ref{l:main2}  is tight}
\label{fig:xyu}
\end{figure}

\begin{proof}
   Again, for concreteness, assume that $\chi$ is red. Consider the shortest path from $y$ to $x$ and let $u$ be the first point at which it turns. It is easy to see that the angle $\angle yux$ is at least $\pi/4$. Using this and the law of sines, we see that
   \begin{equation}\label{eqsin} \sqrt 2 \|x-y\|_2 \ge \frac{\|x-y\|_2}{\sin \angle yux} = \frac{\|u-x\|_2}{\sin \angle uyx}=\frac{\|u-y\|_2}{\sin \angle uxy}\ge \max\{\|u-x\|_2,\|u-y\|_2\}.\end{equation}
   We have $\|u-y\|_2 = d_R(u,y)$. At the same time, $u$ is the endpoint of a hypotenuse of a triangle $\tau\in \mathcal T$ that contains $x$. Applying Lemma~\ref{l:main1}, we get that $d_R(u,x)\le 2\sqrt 2 \|u-x\|_2.$ Combining these bounds together, we have
   $$d_R(x,y) = d_R(u,x)+d_R(u,y) \le 2\sqrt 2 \|u-x\|_2+\|u-y\|_2\le 4 \|x-y\|_2+\sqrt 2  \|x-y\|_2.$$
Using basic calculus, one can optimize the function $2\sqrt 2 \|u-x\|_2+\|u-y\|_2$ given the sine conditions \eqref{eqsin} and obtain that the maximum is obtained when $\sin\angle uyx = 1,$
$\sin\angle uxy = \frac 1{\sqrt 2}$ (corresponding to the angles $\pi/2, \pi/4, \pi/4$, the latter being $\angle yux$). As a result, we get $d_R(x,y) \le 5 \|x-y\|_2$.
\end{proof}

\subsection{Proof of Theorem~\ref{thm2}}\label{ssec:33}
Take $n$ as in the statement of the theorem and choose $n'>2n$ such that $n' = 2^m$. For shorthand, we rename $n' = n.$ Construct the trees $B,R$ as in Section~\ref{sec21}
 with $n$. Let $L$ be the set of leaves of $R.$ Note that property (ii) of Theorem~\ref{thm2} is valid for $L$ due to Observation~\ref{obs} (a) and (c). Next, we modify the tree $B$ by including the red leaves into $B$ as follows: for each red leaf $\ell$, connect it to the blue vertex $x$ guaranteed by Observation~\ref{obs} (d). We obtain a tree $B'.$ We take  $R, B'$ as the trees $T_1,T_2$ from the statement of the theorem. We are only left to verify property (i).

Fix any two vertices $\ell_1,\ell_2\in L$. Find a segment $h$ of the smallest order $i$ (cf. Section~\ref{sec21}) that separates $\ell_1$ and $\ell_2$. That is, for some $\tau\in \mathcal T_i$ we have $\ell_1,\ell_2\in \tau$, and then, by drawing the height $h$ in $\tau$, we get two triangles  $\tau_1,\tau_2\in \mathcal T_{i+1}$, such that $\ell_1\in \tau_1$ and $\ell_2\in \tau_2.$  We consider two cases depending on the color of $h$.

If $h$ is red, then let $y$ be the intersection point of $h$ and the segment $\ell_1\ell_2$. Note that $y$ must lie on the red tree, since the only segment on $h$ that does not belong to the red tree is the one closest to the right angle vertex $v$ of $\tau$, and $\ell_1\ell_2$ cannot pass too close to $v$. Triangles $\tau_i$ have red catheti, and so by Lemma~\ref{l:main2} we have $d_R(\ell_j,y)\le 5\|\ell_j-y\|_2$ for $j=1,2$, and thus
$$d_R(\ell_1,\ell_2)\le d_R(\ell_1,y)+d_R(y,\ell_2) \le 5(\|\ell_1-y\|_2+ \|\ell_2-y\|_2)=5\|\ell_1-\ell_2\|_2.$$
(Here, we should note that, although $y$ is not necessarily integral and thus does not belong to $R$, it is easy to see that the first inequality is valid by `rounding' $y$, that is, replacing it with one of the endpoints of the edge it belongs to.)

If $h$ is blue, then the argument is similar. Specifically, we first take the vertices $x(\ell_1),x(\ell_2)$ guaranteed by Observation~\ref{obs} (d). We note that, by the observation, $h$ still separates them.  We apply the argument as above to $x(\ell_1),x(\ell_2)$ and get that
$$d_B(x(\ell_1),x(\ell_2))\le 5\|x(\ell_1)-x(\ell_2)\|_2.$$
Using the triangle inequality several times, we see that
$$d_{B'}(\ell_1,\ell_2) \le 2+d_B(x(\ell_1),x(\ell_2))\le 2+5(2+\|\ell_1-\ell_2\|_2)= 12+5\|\ell_1-\ell_2\|_2.$$
This concludes the proof of the theorem.

\section{One tree is a bad spanner}\label{sec1tree}
In this section, we show that for certain sets of $n$ points on the Euclidean plane  a tree cover consisting of $1$ tree must have stretch $\Omega(n)$.

The point set is a natural one: take $n$ points evenly distributed on a unit circle. Take an arbitrary (non-Steiner) tree spanner $T$ on these $n$ points. Find a centroid $c$ of $T$. Partition the vertices of the graph (excluding $c$) into two groups of vertices of at least $n/3$ points each, that are connected (only) through $c$. Color them red and blue, respectively. Next, since at least a third of the points is red, and at least a third of the points is blue,  none of the colors is fully contained in an arc $A$ of angular length $2\pi/3$ centered at $c.$ Thus, we can find a red-blue pair $r,b$ of consecutive points on the circle that do not belong to $A$. Then
$$d_T(r,b) =d_T(r,c)+d_T(b,c)\ge \|r-c\|_2+\|b-c\|_2.$$
Replacing Euclidean distance with the angular distance, which is the same up to constants independent of $n$, the latter sum is at least $\pi/3+\pi/3 = 2\pi/3.$ At the same time, the actual angular distance between $r,b$ is $2\pi/n.$ This implies  $d_T(r,b) =\Omega(n) \|r-b\|_2$.

\section{Tree covers in higher dimensions}\label{sechigh}
\subsection{Strong tree covers}
What is the minimum size of a tree cover with constant stretch for a point set in $\mathbb R^d$? This value could be anywhere between $2$ and $d+1$. We believe that it should be equal to $d$. Although we did not manage to progress on this question directly, we managed to prove a lower bound for a closely related notion, which we call a strong tree cover. For simplicity, we will work with the integer grid only.

    Let $x$ be a point in $[n]^{d}$. We will use the notation
    $dist(x,y)=\|x_{i}-y_{i}\|_\infty$.
    We define $1-neighbourhood$ of $x$ as a set of points $x'$ such that $dist(x,x') \le 1$.
    For example, the $1-neighbourhood$ of $(1,2)$ in $[3]^{2}$ is $\{(1,1),(1,2),(1,3),(2,1),(2,2),(2,3)\}$.

    Consider a set of trees $\mathcal T = \{T_{1},...,T_{k}\}$ with the vertices being the integer grid $[n]^{d}$. For any $x,y\in [n]^{d}$, denote by $dist_{i}(x,y)$ the distance between $x$ and $y$ in the tree $T_{i}$. We say that $\mathcal T$ is a {\it strong tree cover with stretch $C$} if for any two points $x$ and $y$ in $[n]^{d}$ there exists $1\leq i \leq k$ such that for any point $x'$ from the $1-neighourhood$ of $x$ and for any point $y'$ from the $1-neighourhood$ of $y$ the inequality $dist_{i}(x',y') \leq C \cdot dist(x',y')$ holds. \\
    {\bf Remark: } The difference between this definition and that of a usual tree cover is the neighborhood property: the index $i$ should be the same for all pairs $(x',y')$ in the neighbourhood. In a sense, for each pair of vertices there should be a tree such that they are  somewhat `deep' in that tree.
    \begin{observation}

        If the set of trees $T_{1},...,T_{k}$ is a strong tree cover with stretch $C$ then it is also a tree cover with stretch $C$.
    \end{observation}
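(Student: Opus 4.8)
The plan is to observe that the strong tree cover condition is a strict strengthening of the ordinary tree cover condition, obtained by demanding a single uniform tree index over whole $1$-neighbourhoods rather than merely for one pair of points. To deduce the ordinary condition it therefore suffices to specialize the strong one to the trivial choice of neighbours.

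First I would fix an arbitrary pair of points $x,y \in [n]^{d}$ and apply the definition of a strong tree cover to this pair. This produces an index $1 \le i \le k$ with the property that $dist_{i}(x',y') \le C \cdot dist(x',y')$ holds simultaneously for every $x'$ in the $1$-neighbourhood of $x$ and every $y'$ in the $1$-neighbourhood of $y$.

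The key step is then to note that $x$ itself lies in its own $1$-neighbourhood, since $dist(x,x)=0\le 1$, and likewise $y$ lies in the $1$-neighbourhood of $y$. Taking $x'=x$ and $y'=y$ in the inequality above yields $dist_{i}(x,y) \le C \cdot dist(x,y)$ for this same index $i$. As $x,y$ were arbitrary, this is precisely the assertion that $\{T_{1},\dots,T_{k}\}$ is a tree cover with stretch $C$, which is what we wanted.

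There is no genuine obstacle here: the statement is a direct definitional implication, and the only thing one must verify is that a point belongs to its own $1$-neighbourhood. The actual content of the strong tree cover notion --- and the reason the converse implication fails in general --- lies exactly in the uniformity of the index $i$ across the two neighbourhoods, a requirement that the ordinary tree cover does not impose.
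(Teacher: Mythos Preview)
Your proposal is correct and is exactly the natural argument: specialize the strong condition to $x'=x$, $y'=y$, using that every point lies in its own $1$-neighbourhood. The paper itself states this observation without proof, so your write-up is precisely what is intended.
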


Importantly, the construction with quadtrees from \cite{Chan98} gives a strong tree cover for $[n]^{d}$ consisting of $d+1$ trees and with constant stretch. The proof is the same as for the verification of the tree cover property, and we omit it.

The construction from Section \ref{sec2} can be easily modified to a strong tree cover. Recall the trees $R$, $B$, and $B'$ from Section \ref{sec2}.  Let $L_B$ be the set of leaves of $B$. Note that, analogously to Observation \ref{obs}.d, for any blue leaf $l$ of $B$ there exists a red node $y(l)$ such that $||l-x(l)||_2 = 1$ and such that, for every triangle $\tau \in \T$, if $l\in \tau$, then $y(l)\in \tau$. Then we can modify the tree $R$ including the blue leaves as follows: for each blue leaf $l$, connect it to the red vertex $y(l)$. We obtain a tree $R'$ (analogously to $B'$ in Section \ref{ssec:33}).
    \begin{lemma}
        Our example for two trees $R',B'$
        is a strong tree cover.
    \end{lemma}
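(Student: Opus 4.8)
The plan is to verify the strong tree cover property directly for $\{R',B'\}$, leveraging the distance analysis already established in Section~\ref{sec2}. Recall that the strong tree cover property requires: for any $x,y\in H_n^\Delta$, there is one tree (either $R'$ or $B'$) that simultaneously handles all pairs $(x',y')$ with $x'$ in the $1$-neighbourhood of $x$ and $y'$ in the $1$-neighbourhood of $y$, achieving stretch $C$ for all of them at once. Since every integer point is either red or blue and has a partner of the opposite color at distance exactly $1$ (Observation~\ref{obs}(c),(d) and its blue analogue), both trees $R'$ and $B'$ now contain all integer points of $H_n^\Delta$ as vertices, so distances $dist_R'(\cdot,\cdot)$ and $dist_{B'}(\cdot,\cdot)$ are defined for every pair.

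The key step is to reuse the separating-segment argument from Section~\ref{ssec:33}. Given $x,y$, I would find the smallest-order segment $h$ separating them, exactly as before, and choose the tree by the color of $h$: if $h$ is red, use $R'$; if blue, use $B'$. The crucial observation is that this choice is \emph{robust under perturbation}: moving $x$ and $y$ each by $\ell_\infty$-distance $1$ keeps them in triangles that $h$ still separates, because $h$ was chosen at the coarsest scale where separation first occurs, and the $1$-neighbourhood is small relative to the triangle containing both points (here the factor-$2$ slack built into the choice $n'>2n$, or rather the geometry of the grid, ensures the neighbourhood does not cross $h$ in the wrong direction). For each perturbed pair $(x',y')$, the same application of Lemma~\ref{l:main2} (via the appropriate color partner $x(\ell)$ or $y(\ell)$ when a point is of the wrong color) yields $dist_\chi(x',y')\le 5\|x'-y'\|_2 + O(1)$, and since $dist(x',y')=\|x'-y'\|_\infty$ is comparable to $\|x'-y'\|_2$ up to the $\sqrt2$ factor, we get a uniform constant stretch $C$ for the whole neighbourhood using the single tree of color $\chi$.

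I would organize the argument in two cases by the color of $h$, handling each perturbed pair by first replacing any off-color point with its distance-$1$ same-color partner (Observation~\ref{obs}(d) and its blue mirror), then applying Lemma~\ref{l:main2} to the separating segment. The additive constants accumulate only a bounded amount (each triangle-inequality step costs at most a constant independent of $n$), so the final stretch is some explicit $C=O(1)$.

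The main obstacle will be the robustness claim: verifying that for \emph{every} point $x'$ in the $1$-neighbourhood of $x$ and every $y'$ near $y$, the segment $h$ still separates $x'$ from $y'$ in the sense needed to apply Lemma~\ref{l:main2}, and that neither perturbed point lands in the small region near the right-angle vertex where $h$ fails to lie on the tree. This requires a careful but elementary case analysis of how the $1$-neighbourhood interacts with the triangle boundaries at scale $i$; because the separating segment is chosen at the coarsest possible scale, the triangle $\tau\in\mathcal T_i$ containing both $x$ and $y$ has side length substantially larger than $1$, which gives the room needed to absorb the perturbation. Handling the degenerate situations, where $x$ or $y$ sits very close to $h$ itself, will need the same `rounding to an endpoint of the edge' trick used in Section~\ref{ssec:33}.
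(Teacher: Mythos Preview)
Your proposal follows the same separating-segment skeleton as the paper, but there is a genuine gap in the key step. You commit to choosing the tree by the color of the separator $h$, and then assert that this choice is robust under perturbation because ``the triangle $\tau\in\mathcal T_i$ containing both $x$ and $y$ has side length substantially larger than $1$.'' Large side length does not help when $x$ (or $y$) lies \emph{on} the boundary of $\tau$: its $1$-neighbourhood then spills outside $\tau$, the perturbed point $x'$ need not lie in either of the two subtriangles $\tau_1,\tau_2$ cut out by $h$, and Lemma~\ref{l:main2} no longer applies to the pair $(x',y')$. This is not the ``close to $h$'' degeneracy you flag at the end; it is the opposite situation, where $x$ sits on a side of $\tau$ of order $<i$, far from $h$.

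The paper handles precisely this via a position-based case split, and crucially the tree it selects is \emph{not always} of the color of $h$. With $h$ red, if $u$ lies on the blue hypotenuse of $\tau_u$ (and $v$ is in the interior of $\tau_v$ or on its blue hypotenuse), the paper switches to $B'$ rather than $R'$: that blue hypotenuse is itself a lower-order segment that controls the $B'$-distance for the entire $1$-neighbourhood of $u$. The paper also slightly alters the separator criterion, taking the smallest-order $h$ that \emph{either} separates $u,v$ \emph{or} has both within distance $1$ of it. So your plan needs two fixes: adjust the selection rule for $h$, and replace ``use the tree of color $\chi(h)$'' by a short case analysis on whether $u$ (resp.\ $v$) sits in the interior, on the hypotenuse, or on a cathetus of its subtriangle, choosing the tree accordingly.
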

    \begin{proof}

    Let $u$ and $v$ be two different points in $H_{n}^{\Delta}$. Find a segment $h$ of the smallest order $i$ (analogously to Section 3.3) that $h$ either separates $u$ and $v$ or $u$ and $v$ both are at the distance at most $1$ to this segment. Let $U$ and $V$ be the $1-$neighbourhoods of $u$ and $v$, respectively.

    It is easy to see that, if $u$ and $v$ both are at the distance at most $1$ to this edge, then the distance from each pair $u_2 \in U, v_2 \in V$ is at most $C\cdot ||u_2-v_2||_2$ in the one of the trees $R',B'$ that have the same color as this segment for a constant $C$.

    Let us suppose that $h$ separates $u$ and $v$. Then $h$ is the height of a triangle $\tau \in \T_i$, and $h$ divides this triangle into two triangles $\tau_u, \tau_v \in \T_{i+1}$ such that $u \in \tau_u, v \in \tau_v$. For concreteness, let $h$ be colored red. If each of the vertices $u$ and $v$ is in the interior of $\tau_u$ and $\tau_v$, then, analogously to Lemma \ref{l:main2}, for any $u_2\in U, v_2\in V$ we have $d_{R'}(u_2,v_2) \leq C\cdot ||u_2-v_2||_2$ for a constant $C$. If $u$ in on the blue hypotenuse of $\tau_u$ and $v$ is either in the interior of $\tau_v$ or in the blue hypotenuse of $\tau_v$, then for any $u_2\in U, v_2\in V$ we have $d_{B'}(u_2,v_2) \leq C\cdot ||u_2-v_2||_2$ for a constant $C$. Finally, if $u$ is on the red cathetus of $\tau_u$, we can analogously get that for any $u_2\in U, v_2\in V$ we have $d_{R'}(u_2,v_2) \leq C\cdot ||u_2-v_2||_2$ for a constant $C$.


    \end{proof}

    \begin{theorem}\label{thmlb}
        For any constant $C$ and dimension $d$ there is $n$ such that any strong tree cover of stretch $C$ for the cube $[n]^{d}$ contains at least $k$ trees $T_{1},...,T_{k}$ with $2k \geq d+1$.
    \end{theorem}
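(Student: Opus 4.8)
The plan is to exhibit, for $n$ large, a configuration of well-separated points in $[n]^{d}$ whose mutual directions span $\mathbb R^{d}$, and to argue that a single tree can be \emph{responsible} (in the strong, neighbourhood sense) only for directions that live, up to two exceptions, on a single axis. Since realising all the relevant directions needs affine dimension $d$, a covering count over the trees then forces $2k\ge d+1$. Concretely, I would first fix a scale $R$ with $1\ll R\ll n$ and place $d+1$ points $p_{0},\dots,p_{d}\in[n]^{d}$ near the vertices of a regular simplex of circumradius $R$ centred in the cube, together with all their $1$-neighbourhoods. These points are affinely independent, pairwise at Euclidean distance $\Theta(R)$, and the edge vectors $p_{j}-p_{i}$ span $\mathbb R^{d}$. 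By the strong tree cover property, for each pair $(p_{i},p_{j})$ there is a tree $T=T(i,j)$ that is good not merely for $(p_{i},p_{j})$ but uniformly on the product of their $1$-neighbourhoods; writing $E_{T}\subseteq\binom{\{0,\dots,d\}}{2}$ for the set of pairs assigned to $T$, we have that $\bigcup_{T}E_{T}$ is the complete graph on $\{0,\dots,d\}$.

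The role of the \emph{strong} hypothesis is to let me take discrete derivatives, which is exactly what an ordinary tree cover does not supply. If $T$ is good on the whole $1$-neighbourhood of $(p_{i},p_{j})$, then for every axis direction $e_{s}$ the quantities $dist_{T}(p_{i}\pm e_{s},p_{j})$ and $dist_{T}(p_{i},p_{j})$ all lie inside the stretch window, so the discrete gradient of $q\mapsto dist_{T}(q,p_{j})$ at $p_{i}$ is comparable, up to the stretch constant, to the Euclidean gradient $\nabla_{q}\|q-p_{j}\|_{2}=\widehat{p_{i}-p_{j}}$. Thus to each assigned pair the tree $T$ can attach a \emph{direction cone} around $\widehat{p_{i}-p_{j}}$, and the first tree-edge leaving $p_{i}$ on the path to $p_{j}$ must lie in this cone. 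The intended structural claim is then that each $E_{T}$ has a vertex cover of size at most $2$ in $\{0,\dots,d\}$: the robust directional bottleneck should force the pairs served by one tree to share, up to two exceptional vertices, a common outgoing axis $\pm w_{T}$. Granting this, the size-$2$ covers must dominate every edge of the complete graph while the simplex directions must span $\mathbb R^{d}$, and a short counting argument yields $2k\ge d+1$.

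The main obstacle is precisely this structural claim: turning the robust low-stretch hypothesis into a genuine directional bottleneck for \emph{constant} (not near-$1$) stretch. When the stretch is close to $1$ the discrete gradient pins the outgoing direction almost exactly, but for a large constant $C$ the cones are wide, and one must rule out a single tree simultaneously realising three mutually $\Omega(1)$-separated simplex directions without also collapsing distances it is not responsible for. I expect to control this by combining the $4$-point (Gromov $0$-hyperbolicity) condition of the tree metric with the obtuse-angle geometry of the simplex, and by using the neighbourhoods to propagate the direction estimate along the whole tree path rather than only at its first edge. The final off-by-one — obtaining $d+1$ rather than $d$ in the count — should come from additionally charging the diagonal neighbourhood pairs sitting at each $p_{i}$, which force every vertex, and not merely all but one, to be dominated by some $S_{T}$.
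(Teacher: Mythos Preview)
Your proposal has a genuine gap at exactly the point you flag as the ``main obstacle,'' and the claim there is not merely hard to prove --- it is false. Consider a single tree $T$ on $[n]^{d}$ that is a star with centre at (an integer point near) the centroid $c$ of your simplex: every other vertex of $[n]^{d}$ is a leaf adjacent to $c$. For any $p_{i}',p_{j}'$ in the $1$-neighbourhoods of distinct simplex vertices we have
\[
dist_{T}(p_{i}',p_{j}')=\|c-p_{i}'\|+\|c-p_{j}'\|\le 2R+O(1),
\]
while $\|p_{i}'-p_{j}'\|=\Theta(R)$. Thus for a fixed constant $C$ (depending only on $d$) this one tree serves \emph{every} pair $(p_{i},p_{j})$ in the strong sense, so $E_{T}$ is the complete graph on $d+1$ vertices and certainly has no vertex cover of size $2$. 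The discrete-gradient heuristic does not rescue this: the strong hypothesis only gives the upper bound $dist_{T}(q,p_{j})\le C\,dist(q,p_{j})$ for $q$ near $p_{i}$, which together with the trivial lower bound $dist_{T}\ge dist$ leaves a factor-$C$ window for the gradient and imposes no useful cone constraint when $C$ is a large constant. Nor does ``the first tree-edge leaving $p_{i}$'' carry directional meaning, since the tree is an abstract tree on all of $[n]^{d}$ and that edge can go to any grid point.

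The paper's argument is entirely different and, crucially, uses the strong property only in the \emph{local} case $x=y$: it says that the whole $1$-neighbourhood of each point lies at bounded $T_{i}$-distance in some tree. One then roots each tree and cuts it at two staggered depth thresholds, producing for each tree two families of bounded-diameter components; the local strong property guarantees that every unit cube is contained in one component from one of these $2k$ families. Thickening components to open sets yields an open cover of the solid cube of multiplicity at most $2k$ by sets of diameter $O_{C,d}(1)$, and Lebesgue's covering dimension theorem forces $2k\ge d+1$. The far-apart pairs you focus on are never used.
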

    That is, we improve the `trivial' lower bound $2$ to  $d/2$ for the case of strong tree covers. The key ingredient is the connection that we found with Lebesgues's covering dimension.
    \begin{proof}
        Put $B=8(C+1)$. Take a root for each tree in some vertex. Then in each tree remove  all edges such that the depth of the bottom vertex in this edge is divisible by $2(B+1)$. We get some connected components. Each component will have the diameter at most $4B$. Duplicate the original tree and now cut another set of edges: those with the depth of bottom vertex being $B+1$ modulo $2(B+1)$. We again get some connected components of diameter at most $4B$.

        Now, for each tree we have two sets of connected components of diameter at most $4B$ (one after cutting all the edges of depth divisible by $2(B+1)$, and the other after cutting all the edges of the depth $(B+1)$ modulo $2(B+1)$). Note the following: if the distance in the tree between two vertices is at most $B$, then on the path between them either there are no edges with the depth of bottom vertex divisible by $2(B+1)$ or no edges with the depth of bottom vertex $B+1$ modulo $2(B+1)$. So they must lie in one connected component for one of the two cuts.

        Apply the condition of being a strong tree cover for $x=y$. Then we have that for each $x$ there is such $i$, so that in the $1$-neighbourhood of $x$ all vertices are at distance at most $C$ from point $x$ in $T_i$. As $B>8C$,  the depth of $x$ is at least $C+2$ away from either being divisible to $2(B+1)$ or being $B+1$ modulo $2(B+1)$. Thus, all the $1-neighbourhood$ of $x$ are lying in one connected component either for the first or for the second cut for this tree.

        Now, consider the $2k$ sets of components of diameter $\leq 4B$.  For each connected component, we construct a {\it cell}: a closed cube of side length $1$ centered at each vertex of the connected component. We get a connected closed set. Then we consider its interior, getting a connected open set for each connected component. Moreover, sets for different connected components (within one set) are disjoint. Then each point $x$ in the cube $[1,n]^{d}$, now thought of as a body in $\mathcal R^d$ lies inside a cell $C(x)$ with center $y$. For this cell, there is a connected component (in one of the $2k$ sets) that contains $y$ together with its $1$-neighborhood. Therefore, the open set corresponding to that connected component contains $x$ in the interior. Thus, we have an open covering of the cube. Also, the multiplicity (the maximum number of sets containing one point) of this covering is at most $2k$.

        Let us rescale so that the cube becomes a unit cube. We constructed a covering by open sets of unit cube with multiplicity at most $2k$ such that diameter of each set is at most $\epsilon=\frac{4B}{n-1}$. If $2k \leq d$ then it is a contradiction with the fact that Lebesgue covering dimension of a unit cube is $d$.
    \end{proof}
    {\bf Remark: } We actually proved that for a strong cover with  with $k \leq d/2$ trees the stretch is $\Omega(n)$.
\subsection{Low-distance tree coverings}
We managed to show a positive result in $\mathbb R^3$ for a slightly weaker notion of a tree covering (inspired by the proof of Theorem~\ref{thmlb}).
    We will call a set of trees $T_{1},...,T_{k}$ in a cube $B=[n]^{d}$ a \textit{low-distance tree covering of stretch $C$} if for any two points $x$ and $y$ such that $dist(x,y) \leq 1$ there exists such $1\leq i \leq k$ that $dist_{i}(x,y) \leq C$. \\
    \begin{theorem}\label{thm3d}
        There exists a low-distance tree covering for $[n]^3$ with $3$ trees and a constant stretch.
    \end{theorem}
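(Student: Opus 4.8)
The plan is to reduce the statement to a purely combinatorial covering problem and then exhibit the covering explicitly. First I would observe that, exactly as in the reductions used earlier in the paper, it suffices to build three spanning trees $T_1,T_2,T_3$ of $[n]^3$, each assembled from a hierarchical (laminar) decomposition of the grid into clusters of $\ell_\infty$-diameter $O(1)$: inside each cluster we take a bounded-diameter spanning subtree, and we join the clusters into one spanning tree by arbitrary edges (cross-cluster distances may be huge, which is harmless for the low-distance property). A pair $x,y$ with $\mathrm{dist}(x,y)\le 1$ is then served by $T_i$ as soon as $x,y$ lie in a common cluster of the $i$-th decomposition. So the whole task becomes: partition $[n]^3$ into $O(1)$-clusters in a few ways so that every $\ell_\infty$-unit pair is co-clustered in at least one of them.

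The key leverage, borrowed from the proof of Theorem~\ref{thmlb}, is that one rooted tree supplies not one but two such clusterings: cutting its edges at depths $\equiv 0$ and at depths $\equiv B+1$ modulo $2(B+1)$ yields two families of components, each of diameter $O(B)$, with the guarantee that any pair at tree-distance $\le B$ survives (stays inside one component of) at least one of the two cuts. Hence three trees furnish six clusterings, and it suffices to produce six partitions of $[n]^3$ into $O(1)$-boxes that together capture every unit direction. This is where the dimension count becomes favourable: three box tilings provably cannot capture the diagonal direction $(1,1,1)$ (a ``permutation transversal'' always produces a pair split by all three tilings), but six can, consistently with Lebesgue's covering dimension since $6\ge d+1=4$.

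Concretely I would take six shifted tilings of $\mathbb Z^3$ into cubes of side $s\ge 6$, with integer shift vectors $\phi_1,\dots,\phi_6$ chosen so that in each coordinate the six shifts are pairwise distinct modulo $s$. For a direction $e\in\{-1,0,1\}^3\setminus\{0\}$ with support $S(e)$, tiling $m$ splits the pair $(x,x+e)$ only through a coordinate $c\in S(e)$, and then only when $x_c$ lies in a single residue class $\beta_m^c \bmod s$. If some pair were split by all six tilings, we could pick for each $m$ a coordinate $t(m)\in S(e)$ with $x_{t(m)}\equiv\beta_m^{t(m)}$; since $6>|S(e)|$, two indices $m\ne m'$ must share a coordinate $c^{*}$, forcing $\beta_m^{c^{*}}\equiv\beta_{m'}^{c^{*}}$ and contradicting distinctness. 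Thus every unit pair is captured, simultaneously for all $26$ directions.

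The step I expect to be the main obstacle is the realization of these six partitions as the two-phase cuts of only three genuine trees. The two clusterings coming from one tree are not independent: they are the same laminar family read at two phases of its depth, so they behave like two offsets of one hierarchy rather than two freely chosen tilings. The work is therefore to build each $T_i$ as a recursive box-subdivision whose two depth-phases realize a prescribed compatible pair among the six tilings, while keeping all clusters honestly $O(1)$ in diameter and the result a single spanning tree; equivalently, to match the limited freedom of one tree's two phases to the covering requirement and to check the global consistency of the tiling boundaries throughout $[n]^3$. Once this packaging is arranged, the bound $\mathrm{dist}_i(x,y)=O(1)$ on every captured pair is immediate from the $O(1)$ cluster diameter, yielding a constant stretch.
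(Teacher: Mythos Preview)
Your reduction to the clustering problem is correct and coincides with the paper's starting point: three partitions of $[n]^3$ into clusters of bounded $\ell_\infty$-diameter, such that every unit pair is co-clustered in at least one of them, directly yield three trees with the required property. The gap is precisely where you locate it, and it is not a packaging detail. The two depth-phase clusterings of a single rooted tree (as in the proof of Theorem~\ref{thmlb}) are interleaved levels of one laminar hierarchy; two independently shifted cube tilings at the same scale have generically non-laminar overlaps, and there is no evident rooted tree whose $\bmod\,2(B{+}1)$ cuts reproduce a prescribed pair of grid shifts. Your pigeonhole argument that six shifted cube tilings with coordinatewise distinct offsets capture every unit direction is correct and pleasant, but it only yields a six-tree low-distance cover, not a three-tree one; the compression ``six tilings $\to$ three trees'' is the entire content of the theorem and you leave it open.

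The paper avoids this obstruction by keeping the bijection ``one partition $=$ one tree'' and dropping the box constraint instead. You are right that three axis-aligned cube tilings cannot handle the diagonal direction, but three \emph{irregular} bounded-diameter clusterings can: the paper exhibits such a triple, found by a SAT solver on a $5\times 5\times 5$ block and extended to all of $[n]^3$ by periodicity with period~$8$ and symmetry. The clusters (diameter at most~$18$) are listed explicitly as $8\times 8\times 8$ lookup tables, and the unit-pair property is verified by a short computer check. There is no hierarchy and no depth-cut trick; the proof is a finite explicit certificate rather than a structural argument.
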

    The proof of this result is technical and uses computer search. We sketch it in the appendix.
\small
\bibliographystyle{alphaurl}
\bibliography{ref}

\section{Appendix. Proof of Theorem~\ref{thm3d}.}

        Since we only need to verify a local property, we will construct forests rather than trees (this is somewhat similar to the proof of Theorem~\ref{thmlb}, where each tree was cut into forests). the forests will possess the necessary weak tree cover property, and then we may connect the connected components of a forest arbitrarily to create a tree.

        In what follows, our goal is to define each forest. In each forest, there is a correspondence between connected components and vertices with coordinates that are all divisible by $4$. We will have three sets of connected components. Each set partitions the vertices in the grid. We do not specify the exact tree structure inside each component since, whichever the structure is, the diameter of each component is bounded. We only need to specify the vertices that belong to each component, and then make sure that the weak tree cover property is satisfied (i.e., that any two points at distance $1$ in the grid belong to the same component in one of the three sets).

        The construction is periodic with  period $8$ along each of the coordinates. The partition between the components will be indicated in the table below. We need to introduce notation for certain vertices. In one block of size $8\times 8\times 8$ we have $8$ vertices with all coordinates divisible by $4$. Namely, for $x,y,z\in\mathbb Z$, denote
        \begin{align*}
            A&:=(8x,8y,8z),\\
            B&:=(8x+4,8y,8z),\\
            C&:=(8x,8y+4,8z),\\
            D&:=(8x+4,8y+4,8z),\\
            E&:=(8x,8y,8z+4),\\
            F&:=(8x+4,8y,8z+4),\\
            G&:=(8x,8y+4,8z+4),\\
            A&:=(8x+4,8y+4,8z+4).
        \end{align*}
        Below, we describe the partition of the vertices in a block $8\times 8\times 8$ into connected components.  For each component set, we have $8\times 8\times 8$ entries in correspondence with the points in the block. The point with coordinates $(8x+a,8y+b,8z+c)$, $0\le a,b,c\le 7$ belongs to the component of the vertex that is written in the $a$-th matrix, $b$-th row and $c$-th column (of the corresponding component set).\vskip+0.1cm
        {\bf Remark 1:} Symbol A (or any other symbol) means the closest vertex of the corresponding form to the point in question. Thus, for example, for $(2,2,6)$ $A$ means $(0,0,8)$, while for $(5,3,2)$ it means  $(8,0,0)$. However, for points at distance $1$ it will always mean the same head. 
   \vskip+0.1cm
    {\bf Remark 2 } Let us comment on how we obtained this construction.
    We have firstly constructed three divisions in $\{0,1,2,3,4\}^{3}$ using SAT-solver, then extended this construction using  symmetry.
    We made the corresponding program code available.
    In order to check the correctness of the construction, first run this (generator) \url{https://pastebin.com/ZMbg38Zt}
    Then copy the output of this strictly after 34918469 and run this (checker) \url{https://pastebin.com/ttp4BrhB} 
    \\
    \vskip+0.1cm
    Our construction is given below: In the $a$-th table in the $b$-th row in the $c$-th column there are three symbols the symbol in the $(8x+a,8y+b,8z+c)$ in the first division, in the second division and in the third division we construct.\\

$a=0$:\\
\begin{tabular}{ | c | c | c | c | c | c | c | c | }
\hline
 (A,A,A) & (A,E,E) & (A,E,E) & (A,E,E) & (E,E,E) & (A,E,E) & (A,E,E) & (A,E,E)  \\
\hline
 (A,C,C) & (A,G,G) & (G,E,G) & (G,E,G) & (G,E,G) & (G,E,G) & (G,E,G) & (A,G,G)  \\
\hline
 (A,C,C) & (G,C,G) & (G,E,G) & (G,E,G) & (G,E,G) & (G,E,G) & (G,E,G) & (G,C,G)  \\
\hline
 (A,C,C) & (G,C,G) & (G,C,G) & (G,G,G) & (G,G,G) & (G,G,G) & (G,C,G) & (G,C,G)  \\
\hline
 (C,C,C) & (G,C,G) & (G,C,G) & (G,C,C) & (G,G,G) & (G,C,C) & (G,C,G) & (G,C,G)  \\
\hline
 (A,C,C) & (G,C,G) & (G,C,G) & (G,G,G) & (G,G,G) & (G,G,G) & (G,C,G) & (G,C,G)  \\
\hline
 (A,C,C) & (G,C,G) & (G,E,G) & (G,E,G) & (G,E,G) & (G,E,G) & (G,E,G) & (G,C,G)  \\
\hline
 (A,C,C) & (A,G,G) & (G,E,G) & (G,E,G) & (G,E,G) & (G,E,G) & (G,E,G) & (A,G,G)  \\
\hline
\end{tabular}
\\
$a=1$:\\
\begin{tabular}{ | c | c | c | c | c | c | c | c | }
\hline
 (A,B,B) & (A,E,F) & (A,E,F) & (A,E,F) & (F,E,F) & (A,E,F) & (A,E,F) & (A,E,F)  \\
\hline
 (A,C,C) & (A,D,G) & (A,D,G) & (A,E,F) & (H,E,F) & (A,E,F) & (A,D,G) & (A,D,G)  \\
\hline
 (A,C,C) & (A,D,G) & (A,D,G) & (G,D,F) & (G,H,F) & (G,D,F) & (A,D,G) & (A,D,G)  \\
\hline
 (A,C,C) & (G,D,C) & (G,D,D) & (G,D,F) & (G,H,F) & (G,D,F) & (G,D,D) & (G,D,C)  \\
\hline
 (D,C,C) & (G,C,H) & (G,H,H) & (G,H,G) & (G,H,G) & (G,H,G) & (G,H,H) & (G,C,H)  \\
\hline
 (A,C,C) & (G,D,C) & (G,D,D) & (G,D,F) & (G,H,F) & (G,D,F) & (G,D,D) & (G,D,C)  \\
\hline
 (A,C,C) & (A,D,G) & (A,D,G) & (G,D,F) & (G,H,F) & (G,D,F) & (A,D,G) & (A,D,G)  \\
\hline
 (A,C,C) & (A,D,G) & (A,D,G) & (A,E,F) & (H,E,F) & (A,E,F) & (A,D,G) & (A,D,G)  \\
\hline
\end{tabular}
\\
$a=2$:\\
\begin{tabular}{ | c | c | c | c | c | c | c | c | }
\hline
 (A,B,B) & (A,B,F) & (A,B,F) & (A,B,F) & (F,F,F) & (A,B,F) & (A,B,F) & (A,B,F)  \\
\hline
 (A,D,D) & (A,D,D) & (B,D,F) & (B,D,F) & (H,H,F) & (B,D,F) & (B,D,F) & (A,D,D)  \\
\hline
 (A,D,D) & (B,D,C) & (B,D,F) & (B,D,F) & (H,G,F) & (B,D,F) & (B,D,F) & (B,D,C)  \\
\hline
 (A,C,C) & (G,D,C) & (G,D,H) & (G,D,F) & (H,H,F) & (G,D,F) & (G,D,H) & (G,D,C)  \\
\hline
 (D,C,C) & (G,H,C) & (G,G,H) & (G,H,H) & (G,H,H) & (G,H,H) & (G,G,H) & (G,H,C)  \\
\hline
 (A,C,C) & (G,D,C) & (G,D,H) & (G,D,F) & (H,H,F) & (G,D,F) & (G,D,H) & (G,D,C)  \\
\hline
 (A,D,D) & (B,D,C) & (B,D,F) & (B,D,F) & (H,G,F) & (B,D,F) & (B,D,F) & (B,D,C)  \\
\hline
 (A,D,D) & (A,D,D) & (B,D,F) & (B,D,F) & (H,H,F) & (B,D,F) & (B,D,F) & (A,D,D)  \\
\hline
\end{tabular}
\\
$a=3$:\\
\begin{tabular}{ | c | c | c | c | c | c | c | c | }
\hline
 (A,B,B) & (A,B,F) & (A,B,F) & (E,B,F) & (F,F,F) & (E,B,F) & (A,B,F) & (A,B,F)  \\
\hline
 (B,B,D) & (B,B,D) & (B,B,D) & (D,B,F) & (H,H,F) & (D,B,F) & (B,B,D) & (B,B,D)  \\
\hline
 (A,D,D) & (A,D,D) & (D,D,D) & (B,D,F) & (H,H,F) & (B,D,F) & (D,D,D) & (A,D,D)  \\
\hline
 (D,D,C) & (G,D,C) & (G,D,H) & (H,D,H) & (H,H,F) & (H,D,H) & (G,D,H) & (G,D,C)  \\
\hline
 (D,D,C) & (G,D,C) & (G,H,H) & (G,H,H) & (G,H,H) & (G,H,H) & (G,H,H) & (G,D,C)  \\
\hline
 (D,D,C) & (G,D,C) & (G,D,H) & (H,D,H) & (H,H,F) & (H,D,H) & (G,D,H) & (G,D,C)  \\
\hline
 (A,D,D) & (A,D,D) & (D,D,D) & (B,D,F) & (H,H,F) & (B,D,F) & (D,D,D) & (A,D,D)  \\
\hline
 (B,B,D) & (B,B,D) & (B,B,D) & (D,B,F) & (H,H,F) & (D,B,F) & (B,B,D) & (B,B,D)  \\
\hline
\end{tabular}
\\
$a=4$:\\
\begin{tabular}{ | c | c | c | c | c | c | c | c | }
\hline
 (B,B,B) & (F,B,F) & (F,B,F) & (B,F,F) & (F,F,F) & (B,F,F) & (F,B,F) & (F,B,F)  \\
\hline
 (D,B,D) & (H,B,D) & (B,B,D) & (B,D,F) & (H,H,F) & (B,D,F) & (B,B,D) & (H,B,D)  \\
\hline
 (D,D,D) & (H,D,D) & (D,D,D) & (B,D,F) & (H,H,F) & (B,D,F) & (D,D,D) & (H,D,D)  \\
\hline
 (D,D,D) & (H,D,H) & (H,D,H) & (H,D,H) & (H,H,F) & (H,D,H) & (H,D,H) & (H,D,H)  \\
\hline
 (D,D,D) & (H,D,H) & (H,D,H) & (H,H,H) & (H,H,H) & (H,H,H) & (H,D,H) & (H,D,H)  \\
\hline
 (D,D,D) & (H,D,H) & (H,D,H) & (H,D,H) & (H,H,F) & (H,D,H) & (H,D,H) & (H,D,H)  \\
\hline
 (D,D,D) & (H,D,D) & (D,D,D) & (B,D,F) & (H,H,F) & (B,D,F) & (D,D,D) & (H,D,D)  \\
\hline
 (D,B,D) & (H,B,D) & (B,B,D) & (B,D,F) & (H,H,F) & (B,D,F) & (B,B,D) & (H,B,D)  \\
\hline
\end{tabular}
\\
$a=5$:\\
\begin{tabular}{ | c | c | c | c | c | c | c | c | }
\hline
 (A,B,B) & (A,B,F) & (A,B,F) & (E,B,F) & (F,F,F) & (E,B,F) & (A,B,F) & (A,B,F)  \\
\hline
 (B,B,D) & (B,B,D) & (B,B,D) & (D,B,F) & (H,H,F) & (D,B,F) & (B,B,D) & (B,B,D)  \\
\hline
 (A,D,D) & (A,D,D) & (D,D,D) & (B,D,F) & (H,H,F) & (B,D,F) & (D,D,D) & (A,D,D)  \\
\hline
 (D,D,C) & (G,D,C) & (G,D,H) & (H,D,H) & (H,H,F) & (H,D,H) & (G,D,H) & (G,D,C)  \\
\hline
 (D,D,C) & (G,D,C) & (G,H,H) & (G,H,H) & (G,H,H) & (G,H,H) & (G,H,H) & (G,D,C)  \\
\hline
 (D,D,C) & (G,D,C) & (G,D,H) & (H,D,H) & (H,H,F) & (H,D,H) & (G,D,H) & (G,D,C)  \\
\hline
 (A,D,D) & (A,D,D) & (D,D,D) & (B,D,F) & (H,H,F) & (B,D,F) & (D,D,D) & (A,D,D)  \\
\hline
 (B,B,D) & (B,B,D) & (B,B,D) & (D,B,F) & (H,H,F) & (D,B,F) & (B,B,D) & (B,B,D)  \\
\hline
\end{tabular}
\\
$a=6$:\\
\begin{tabular}{ | c | c | c | c | c | c | c | c | }
\hline
 (A,B,B) & (A,B,F) & (A,B,F) & (A,B,F) & (F,F,F) & (A,B,F) & (A,B,F) & (A,B,F)  \\
\hline
 (A,D,D) & (A,D,D) & (B,D,F) & (B,D,F) & (H,H,F) & (B,D,F) & (B,D,F) & (A,D,D)  \\
\hline
 (A,D,D) & (B,D,C) & (B,D,F) & (B,D,F) & (H,G,F) & (B,D,F) & (B,D,F) & (B,D,C)  \\
\hline
 (A,C,C) & (G,D,C) & (G,D,H) & (G,D,F) & (H,H,F) & (G,D,F) & (G,D,H) & (G,D,C)  \\
\hline
 (D,C,C) & (G,H,C) & (G,G,H) & (G,H,H) & (G,H,H) & (G,H,H) & (G,G,H) & (G,H,C)  \\
\hline
 (A,C,C) & (G,D,C) & (G,D,H) & (G,D,F) & (H,H,F) & (G,D,F) & (G,D,H) & (G,D,C)  \\
\hline
 (A,D,D) & (B,D,C) & (B,D,F) & (B,D,F) & (H,G,F) & (B,D,F) & (B,D,F) & (B,D,C)  \\
\hline
 (A,D,D) & (A,D,D) & (B,D,F) & (B,D,F) & (H,H,F) & (B,D,F) & (B,D,F) & (A,D,D)  \\
\hline
\end{tabular}
\\
$a=7$:\\
\begin{tabular}{ | c | c | c | c | c | c | c | c | }
\hline
 (A,B,B) & (A,E,F) & (A,E,F) & (A,E,F) & (F,E,F) & (A,E,F) & (A,E,F) & (A,E,F)  \\
\hline
 (A,C,C) & (A,D,G) & (A,D,G) & (A,E,F) & (H,E,F) & (A,E,F) & (A,D,G) & (A,D,G)  \\
\hline
 (A,C,C) & (A,D,G) & (A,D,G) & (G,D,F) & (G,H,F) & (G,D,F) & (A,D,G) & (A,D,G)  \\
\hline
 (A,C,C) & (G,D,C) & (G,D,D) & (G,D,F) & (G,H,F) & (G,D,F) & (G,D,D) & (G,D,C)  \\
\hline
 (D,C,C) & (G,C,H) & (G,H,H) & (G,H,G) & (G,H,G) & (G,H,G) & (G,H,H) & (G,C,H)  \\
\hline
 (A,C,C) & (G,D,C) & (G,D,D) & (G,D,F) & (G,H,F) & (G,D,F) & (G,D,D) & (G,D,C)  \\
\hline
 (A,C,C) & (A,D,G) & (A,D,G) & (G,D,F) & (G,H,F) & (G,D,F) & (A,D,G) & (A,D,G)  \\
\hline
 (A,C,C) & (A,D,G) & (A,D,G) & (A,E,F) & (H,E,F) & (A,E,F) & (A,D,G) & (A,D,G)  \\
\hline
\end{tabular}
\\ \\ \\
    Concluding, we have divided everything into blocks in three groups, such that the diameter of each block is at most $18$. We then construct the corresponding trees, connecting every vertex with its head in each division (distance $\leq 9$), and then connecting the heads from one set into a tree arbitrarily.
    This way, we get $3$ trees, where the distance between every two vertices at distance $\le 1$ is at most $18$. This is a weak tree covering by $3$ trees in $[n]^{3}$.

{\bf Remark 3.}  We may blow up each cell into a cube $K \times K \times K$ and get all vertices at  distance $\leq K$ being at distance $\le f(K)$ for some function of $K$. We yet do not understand, whether we can scale this construction to the usual tree covering, not just the low-distance tree covering.

\end{document}